\renewcommand{\phi}{\varphi}
\newcommand{\iy}{\infty}
\renewcommand{\leq}{\leqslant}
\renewcommand{\geq}{\geqslant}
\newcommand{\ket}[1]{| #1 \rangle}
\newcommand{\ketbra}[2]{| #1 \rangle \langle #2 |}
\DeclareMathOperator{\trace}{Tr}
\DeclareMathOperator{\I}{I}
\DeclareMathOperator{\id}{id}
\DeclareMathOperator{\rk}{rk}
\newcommand{\N}{\mathbb{N}}
\newcommand{\C}{\mathbb{C}}
\newcommand{\E}{\mathbb{E}}
\renewcommand{\H}{\mathcal{H}}
\newcommand{\M}{\mathcal{M}}
\renewcommand{\S}{\mathcal{S}}
\newcommand{\cross}{\mathrm{cr}}
\newtheorem{theorem}{Theorem}[section]
\newtheorem{definition}[theorem]{Definition}
\begin{document}

\title[Area law for random graph states]
{\Large  {Area law for random graph states}}

\author{Beno\^it Collins}

\address{B.C.: Department of Mathematics, University of Ottawa, Canada; AIMR, Tohoku University, Sendai; 
CNRS, Lyon 1, France.
{\tt bcollins@uottawa.ca}}

\author{Ion Nechita}
\address{I.N.: CNRS, Laboratoire de Physique Th\'eorique, IRSAMC, 
 Universit\'e de Toulouse, UPS, 31062 Toulouse, France. {\tt nechita@irsamc.ups-tlse.fr}}

\author{Karol \.Zyczkowski}
\address{K.Z.: Institute of Physics, Jagiellonian University,
 Cracow and Center for Theoretical Physics, Polish Academy of Sciences, Warsaw, Poland. 
{\tt karol@tatry.if.uj.edu.pl}}

\begin{abstract}
Random pure states of multi-partite quantum systems, associated with 
arbitrary graphs,  are investigated.
Each vertex of the graph represents a generic interaction between subsystems,
described by a random unitary matrix distributed according to the Haar measure,
while each edge of the graph represents a bi-partite, maximally entangled state.
For any splitting of the graph into two parts we consider the 
corresponding partition of the quantum system and compute the
average entropy of entanglement. First, in the special case where 
the partition does not ``cross'' any vertex of the graph, we show that the area law is satisfied exactly. 
In the general case, we show that the entropy of entanglement obeys an area 
law on average, this time with a correction term that depends on the topologies of the graph and of the partition. The results obtained are applied to the problem of distribution of quantum entanglement in a quantum network with prescribed topology.
\end{abstract}

\maketitle

\tableofcontents

\section{Introduction}

Entanglement in many-body quantum systems is a subject of a considerable recent interest
 \cite{ACL07,WVHC08,ECP10}.
In several physical problems it is important to describe correlations between two 
selected parts of a composed quantum system. For any pure state describing the entire system,
such correlations can be characterized quantitatively
by the {\sl entanglement entropy} $H$, equal to the von Neumann entropy of the 
reduced mixed state obtained by the partial trace over a selected subsystem. 
Note that this quantity depends explicitly on the partition of the 
entire system into subsystems.

Entanglement entropy $H$ does not usually scale proportionally
to the size of the selected region $S$. 
As described in  \cite{AFOV08, ECP10}, for different systems  
the entanglement entropy is approximately proportional to the boundary of this set, 
denoted by $\partial S$. Considering a three dimensional body $S$, 
the size of its boundary is proportional to the {\sl area} of $S$ and not to its volume.
Thus, for any set $S$ of subsystems, the size of its boundary $|\partial S|$ will be called the ``area'' of $S$.
If for a sufficiently large system the leading contribution 
to the entanglement entropy of a given state $|\Psi\rangle$ 
with respect to the partition $\{S,T\}$
is proportional to  the area separating both subsystems,
we will say that the {\sl area law} is satisfied.
In the case of one dimensional systems,
the area law implies that the entropy saturates asymptotically to a constant,
as the area of the boundary consists of two isolated points.

Area laws are studied in context of black hole physics and 
holographic principle \cite{B+86,Sr93}, 
and for ground states of quantum lattice systems with local interactions 
\cite{AEPW02, VLRK03, CC04, Ma09}.
Under some technical assumptions is it possible not only to show
that the area law holds for the ground state of a given two--dimensional model system,  
but also to derive the negative correction term called 
{\sl topological entanglement entropy} \cite{KP06,LW06},
and show that it is universal and depends only on the topology of the interaction.
Investigations of the area law were performed also for subsets with 
fractal boundaries \cite{HLA10}.

For any quantum state of a system of interacting particles, 
described by a lattice or a graph, it is interesting 
to analyze the degree of entanglement between two 
given nodes of the graph \cite{PC+08}. 
This issue is relevant in studies on 
generating entanglement between given nodes of a graph  \cite{PJ+08},
entanglement swapping and quantum repeater systems \cite{ACL07}
and quantum communications in noisy networks \cite{GH+12}.

The aim of this work is to study the area law
for the ensembles of random states associated to a graph. 
The authors have previously introduced \cite{CNZ10}
 an ensemble of pure quantum states for which the structure of interactions
and entanglement are encoded in the graph. In the present work, we study 
the average entropy of entanglement for elements of this ensemble, and we show
that it obeys asymptotically and on average
 area laws of the form
\begin{equation}
	E(\Psi) = H(\rho_S) = |\partial S| \log N  - h_{\Gamma,S} + o(1),
\end{equation}
where $\Psi$ is a graph state, $\rho_S$ is the reduced density operator, 
$|\partial S|$ is the ``area'' of the boundary of the partition $\{S,T\}$, $N$ is 
the dimension of the subsystems and $h_{\Gamma,S}$ is a numerical constant depending
on the graph $\Gamma$ and on the partition $\{S,T\}$. The functionals $E$ and $H$ denote,
respectively, the entropy of entanglement for pure states and the von Neumann
entropy of mixed density matrices.

In previous work, the
 entanglement entropy for a concrete quantum state
of a given composite quantum system  (see e.g. \cite{ECP10}) has been investigated.
Here however, we discuss the statistical properties of 
ensembles of quantum states. We will show 
under which assumptions the area law holds exactly
in our model, while in the opposite case 
the corrections to the area law are derived.

The paper is organized as follows. The model of graph random states from \cite{CNZ10} is
recalled in Section \ref{sec:random-states}, and a detailed example is worked out. 
In Section \ref{sec:adapted}, the notion of adapted marginals of graph states
is introduced and it is shown that for such partitions of the graph
the area law holds exactly.
A more general case of the problem, for which the area law for random graph states
holds only asymptotically is treated in Section \ref{sec:one-vertex}.
In Section \ref{sec:area-general}, a definition of the boundary is proposed in the most general setting.
The general area law is stated in Section  \ref{sec:general}, 
while an application to a transport problem is presented in Section
 \ref{sec:rank-alternative}.
Section \ref{sec:different-dimensions} provides an outline 
of the adjustments to be made to the main results in the case where 
the dimensions of the relevant Hilbert spaces are different.

\section{Random graph states}
\label{sec:random-states}

We recall here the model of {\sl random graph states} introduced in \cite{CNZ10}. 
The most general definition of this model will not be recalled here,
but for completeness we sketch below the main ingredients of the construction. 
For any undirected graph $\Gamma$ consisting of $m$ edges
 $B_1, \ldots, B_m$ and $k$ vertices $V_1, \ldots V_k$,
we associate an ensemble of random pure states $|\Psi\rangle$. 
They describe a quantum system consisting of $n=2m$ particles described in 
the composed Hilbert space $\H_N^{\otimes 2m}$.
The dimension $N$ of any subspace is arbitrary
and in particular we will analyze the asymptotic limit $N\to \infty$.

Each edge of the graph, which connects subsystems labeled by $i$ and $j$,
represents the maximally entangled state, 
$ \ket \Phi^+_{ij} =  \frac{1}{\sqrt{ N}} \sum_{x=1}^{N} |x\rangle_i \otimes |x\rangle _j$,
between two Hilbert spaces $\H_i$ and $\H_j$. 
Each vertex $V$ of the graph, of degree $b$, represents a
generic interaction between $b$ subsystems, described
by a random unitary matrix of size $N^b$, distributed according to the \emph{Haar measure}.
In this way, \emph{independent} random unitary matrices describe unknown interactions in each node, 
which are assumed to be generic. The above discussion can be summarized in the following formula, describing an element from the ensemble:
\begin{equation}\label{eq:def-graph-state}
 |\Psi \rangle =  \left[ \bigotimes_{V \text{ vertex}} 
U_V \right]\left( \bigotimes_{\{i,j\} \text{ edge}} 
|\Phi\rangle^+_{i,j}
 \right),
\end{equation}
where $|\Phi\rangle^+_{i,j}$ are maximally entangled states and $U_V$ are independent Haar unitary operators at each vertex.

Our assumptions differ therefore from the model analyzed in \cite{VC04},
in which edges of the graph denote maximally entangled states of two qubits, while
the vertices represent deterministic local unitary gates or local measurements. 
A more general graph model of quantum networks was investigated in \cite{ACL07,PC+08}
and later studied in context of entanglement percolation \cite{Pe+10b}.
In this version of the model any edge represents a given bipartite state of two qubits, 
while a vertex denotes a deterministic unitary swap gate or a local measurement.
Note that in the system investigated here each dot at the end of an edge
of a graph represents a quantum subsystem described by a $N$ dimensional Hilbert space.
The parameter $N$ is arbitrary, but our results are obtained under
the assumption that the dimension is large.

Let us divide the set of $n=2m$ particles into two disjoint
subsets, labeled by $S$ and $T$. 
The subset $T$ will correspond to subsystems over which
the averaging is performed:
a marginal of the graph state $\ketbra{\Psi}{\Psi}$ defined 
by the partial trace over these subsystems,
\begin{equation}\label{eq:parttrace}
\rho_S = \trace_{T} \ketbra{\Psi}{\Psi},
\end{equation}
forms a mixed state $\rho_S$ supported on the remaining subspaces $S = \{1, \ldots, n\} \setminus T$.
Note that any graph $\Gamma$ and its partition $\mathcal P_\text{trace} = \{S, T\}$
determines an ensemble of random mixed states $\rho_S$.
In \cite{CNZ10} we studied statistical properties of such an ensemble,
which depends on the topology of the graph and on its partition.
Ensemble of random mixed states can be characterized by the
average von Neumann entropy $\E H(\rho_S)=- \E {\rm Tr} \rho_S \log \rho_S$.
By definition this quantity is 
equal to the average entropy of entanglement of the random pure state
$|\Psi\rangle$ with respect to the prescribed 
partition $\mathcal P_\text{trace} = \{S, T\}$.

As a first example for this graphical notation, consider the graph shown in Fig.~\ref{fig:compatible_ex_simple},
which consists of $m=10$ edges (loops and multi-edges are allowed) and $k=5$ vertices. 
Fig.~\ref{fig:non-compatible_ex} shows all $n=2m=20$ subsystems denoted by small black dots
and the random interaction between some of them represented by 
the gray circles at the vertices of the graph. The partition $\{S, T\}$ is indicated graphically by placing 
diagonal crosses on the small black dots corresponding to elements in $T$ (``traced subsystems''), while elements
in $S$ have no crosses on top of them (``surviving subsystems'').

\begin{figure}[htbp]
\centering
\subfigure[]{\label{fig:compatible_ex_simple}\includegraphics{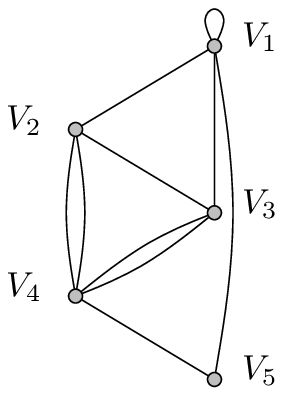}}\qquad\qquad
\subfigure[]{\label{fig:non-compatible_ex}\includegraphics{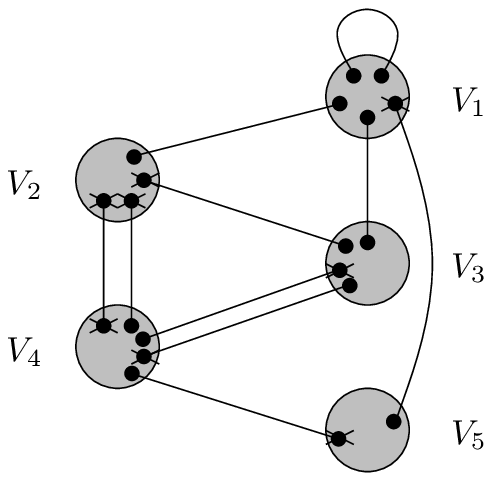}}
\caption{An example of a graph state (a) and one of its marginals (b).}
\label{fig:compatible_example-graph}
\end{figure}

Let us now consider another example, the graph consisting of a single edge, $m=1$
presented in Fig.  \ref{fig:1-edge-graph}. The only edge forms a loop here,
so there is a single vertex $V_1$ only, $k=1$.
The partition splits the $n=2$ partite system into two subsystems $S$ and $T$.
In this case we will relax for a moment the assumption
that the sizes of both subsystems are equal
and will denote them by $N$ and $N'$, respectively. 
In such a case the only parameter of the model is the ratio between
the sizes of both subsystems $c=N/N'$.

\begin{figure}[ht]
\centering
\subfigure[]{\includegraphics{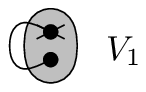}}\quad\quad
\subfigure[]{\includegraphics{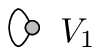}}
\caption{A single loop graph in the standard notation (a) and 
simplified notation (b). This degenerate graph represents bipartite system
($S$ - the black dot and $T$ - crossed dot)
and leads to random mixed states with level density
described by Marchenko--Pastur distributions.}
\label{fig:1-edge-graph}
\end{figure}

As both subsystems are coupled at the vertex $V_1$ by a random unitary matrix 
$U\in U(NN')$
the assumption on the existence of the maximally entangled state becomes
irrelevant here. The partial trace over the subsystem $T$ 
leads to a random mixed state $\rho_S$ of size $N$ \cite{ZS01}.
It can be represented by a normalized Wishart matrix 
$\rho_S=GG^{\dagger}/{\rm Tr}GG^{\dagger}$, 
where $G$ is a rectangular non-hermitian random Ginibre matrix
of size $N \times N'$ (recall that a Ginibre matrix has i.i.d. complex Gaussian entries). The spectral density of the mixed state $\rho_S$
of size $N$ is known to be described asymptotically by the Marchenko--Pastur distribution $\pi_c(x)$, in the following sense (see \cite{nechita} for details): 
\begin{equation}
\text{almost surely, } \quad	\lim_{N \to \infty} \mu_N = \pi_c,
\end{equation}
where the limit corresponds to the weak convergence of probability measures, $\mu_N$ is the empirical distribution of the rescaled eigenvalues of $\rho_S$
\begin{equation}
	\mu_N = \frac{1}{N}\sum_{i=1}^N \delta_{cN\lambda_i(\rho_S)},
\end{equation}
and the parameter $c$ is the asymptotic ratio of dimensions $c = \lim_{N \to \infty} N'/N$. The Marchenko-Pastur distribution is given by
\begin{equation}
\pi_c=\max (1-c,0)\delta_0+\frac{\sqrt{4c-(x-1-c)^2}}{2\pi x} \; \mathbf{1}_{[1+c-2\sqrt{c},1+c+2\sqrt{c}]}(x) \; dx.
\end{equation}
The average von Neumann entropy of a random mixed state $\rho_S$
is given by the integral, 
\begin{equation}
	\E  H(\rho_S)  = \ln N - \int_{}^{} x \log x \; d\pi_c(x) + o(1) = \ln N  - h_c + o(1),
\end{equation}
where $h_c$ is the entropic correction

\begin{equation}
\label{eq:MP-entropy}
h_c  =
\begin{cases}
 \frac{1}{2} + c \log c    \quad & \text{ if } c \geq 1;\\
 \frac{c^2}{2}\quad & \text{ if } 0<c<1.
\end{cases}
\end{equation}

Depending on the desired amount of generality, we are sometimes going to work 
on the model in which all subsystems are described
by Hilbert spaces of the same size $N$.
We will also consider the general version of the model \cite{CNZ10}, in which only the
pairs of subsystems connected by an edge, which describes
a maximally entangled state, have the same dimensions. Hilbert spaces of different dimensions, 
as in the Marchenko--Pastur case treated above, will be allowed. However, we shall ask that these dimensions 
grow at fixed ratios, imposing the asymptotic regime $\mathrm{dim} \H_i = d_i N$, for some 
fixed positive constants $d_i$.

\section{Exact area law for adapted partitions}
\label{sec:adapted}

In this section we show that the area law holds \emph{exactly} for graph states, 
provided that the marginal under consideration satisfies a
 particular condition, called \emph{adaptability}.

Recall that to any graph state we associate two partitions of the set of $n=2m$ subspaces:
 a vertex partition $\mathcal P_\text{vertex}$ 
which encodes the vertices of the graph, and a pair partition $\mathcal P_\text{edge}$
 which encodes the edges (corresponding to maximally entangled states). More precisely, two subsystems $\H_i$ and $\H_j$ belong to the same block of $\mathcal P_\text{vertex}$ if they are attached to the same vertex of the initial graph. Each edge $(i,j)$ of the graph contributes a block of size two $\{i,j\}$ to the edge partition $\mathcal P_\text{edge}$.
 Recall that a marginal~\eqref{eq:parttrace} of a random graph state $\ketbra{\Psi}{\Psi}$
 is specified by a  2-set partition $\mathcal P_\text{trace} = \{S, T\}$. 

\begin{definition}
A marginal $\rho_S$ is called \emph{adapted} if
\begin{equation}
\mathcal P_\text{trace} \geq \mathcal P_\text{vertex}
\end{equation}
for the usual refinement order on partitions. In other words, 
a marginal is adapted if and only if the number of traced out systems in 
each vertex is either zero or maximal. If this is the case, then 
the partition boundary,  which splits the graph into parts $ \{S, T\}$,
does not cross any vertices of the graph.

\end{definition}

Because of the above property, for adapted marginals, we can speak about \emph{traced out vertices}, because if one subsystem of a vertex is traced out, then all
 the other systems of that vertex are also traced out. For the graph state in Figure \ref{fig:compatible_ex_simple} we consider the adapted
 marginal obtained by partial tracing vertices $V_2$ and $V_4$, see Figure \ref{fig:compatible_ex}.

We now define precisely what we mean by area laws \cite{ECP10}. The partition $\{S, T \}$ defines a boundary between 
the set of vertices that are traced out and vertices that survive. In the subsequent sections, the following definition 
will be generalized to take into account non-adapted marginals.

\begin{definition}
The \emph{boundary} of the adapted partition $\{S, T\}$ is defined as the set of all (unoriented) edges $e=\{i_S, j_T\}$ in the graph
 state with the property that $i_S \in S$ and $j_T \in T$. Equivalently, it is the set of edges of the type
 $\includegraphics{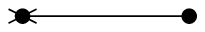}$. The boundary of a partition  shall be denoted by $\partial S$.

The \emph{area} of this boundary is its cardinality $|\partial S |$,  i.e. the number of edges between $S$ and $T$.
\end{definition}

In the example of Figure \ref{fig:compatible_ex}, 
the boundary is represented by a dashed (green) line. The area of the boundary
 in this case is 5: in Figure \ref{fig:compatible_ex}, the boundary line intersects 5 edges.

The main result of this section is that the area law holds \emph{exactly} for adapted marginals of graph states, where we allow arbitrary dimensions of subsystem. Note that, for a given (boundary) edge $\{i,j\}$, we have $d_i = d_j$, the common dimension of the maximally entangled state corresponding to the edge $\{i,j\}$.

\begin{theorem}
Let $\rho_S$ be an \emph{adapted} marginal of a graph state $\ketbra{\Psi}{\Psi}$. Then, the entropy of $\rho_S$ has the following 
\emph{exact, deterministic} value:
\begin{equation}
H(\rho_S) = \log\left(\prod_{\{i,j\} \in \partial S} d_i N\right) = |\partial S| \log N + \log\left(\prod_{\{i,j\} \in \partial S} d_i\right)
\end{equation}
for each value of the size parameter $N$. 
In the particular case where all the Hilbert spaces have dimension $N$ 
(i.e. $d_i=1$ for all $i$), the area law takes the form 
\begin{equation}
H(\rho_S) = |\partial S| \log N.
\end{equation}
\end{theorem}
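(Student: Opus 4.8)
The plan is to strip off the Haar unitaries, reducing the problem to a plain tensor product of maximally entangled states whose marginal factorizes over edges. Since the marginal is adapted, each vertex lies entirely in $S$ or entirely in $T$; write $U_S = \bigotimes_{V \subseteq S} U_V$ for the product of the surviving unitaries and $U_T = \bigotimes_{V \subseteq T} U_V$ for the traced-out ones, so that $\ket\Psi = (U_S \otimes U_T)\ket\Phi$ with $\ket\Phi = \bigotimes_{\{i,j\}} \ket{\Phi^+_{i,j}}$. Because $U_T$ acts only on $T$-subsystems, it drops out of the partial trace over $T$:
\begin{equation}
\rho_S = \trace_T\big[(U_S \otimes U_T)\ketbra{\Phi}{\Phi}(U_S \otimes U_T)^\dagger\big] = U_S \big(\trace_T \ketbra{\Phi}{\Phi}\big) U_S^\dagger .
\end{equation}
As $U_S$ is unitary it leaves the spectrum unchanged, so $H(\rho_S) = H(\trace_T \ketbra{\Phi}{\Phi})$; in particular the value is deterministic, independent of the random unitaries.

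It then remains to compute $H(\trace_T \ketbra{\Phi}{\Phi})$. Since $\ketbra{\Phi}{\Phi}$ is a tensor product over the edges and the trace over $T$ respects this factorization,
\begin{equation}
\trace_T \ketbra{\Phi}{\Phi} = \bigotimes_{\{i,j\}} \trace_{\{i,j\} \cap T} \ketbra{\Phi^+_{i,j}}{\Phi^+_{i,j}} .
\end{equation}
I would then analyze each edge factor separately. If both endpoints lie in $S$, the factor is the pure state $\ketbra{\Phi^+_{i,j}}{\Phi^+_{i,j}}$, contributing $0$ to the entropy. If both lie in $T$, the factor is the scalar $1$, again contributing $0$. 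If $\{i,j\} \in \partial S$, say $i \in S$ and $j \in T$, then tracing out $\H_j$ from the maximally entangled state yields the maximally mixed state $\I_{\H_i}/(d_i N)$ on $\H_i$ (using $\dim\H_i = \dim\H_j = d_i N$), whose entropy is $\log(d_i N)$. Note that by adaptability a loop edge has both of its dots on the same side of the cut and therefore never contributes to $\partial S$, so the three cases above are exhaustive.

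Adding the contributions (entropy is additive across tensor factors) gives $H(\rho_S) = \sum_{\{i,j\} \in \partial S} \log(d_i N) = \log\big(\prod_{\{i,j\} \in \partial S} d_i N\big)$, which rearranges into the stated formula and specializes to $|\partial S|\log N$ when all $d_i = 1$. The step requiring the most care is the very first one: verifying that $U_T$ genuinely cancels in the partial trace. This is precisely where adaptability enters — one uses the identity $\trace_T[(\I_S \otimes U_T)\, X\, (\I_S \otimes U_T)^\dagger] = \trace_T[X]$, valid because $U_T$ is unitary and supported on $T$, and this identity would fail if some vertex straddled the cut, forcing $U_V$ to couple $S$ with $T$. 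In that non-adapted situation the entropy genuinely becomes random and drops below $|\partial S|\log N$, which is the subject of the later sections.
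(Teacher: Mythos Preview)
Your proof is correct and follows essentially the same route as the paper's: both reduce $\rho_S$ to a unitary conjugation of a tensor product over edges, then invoke unitary invariance and additivity of entropy. The paper writes this slightly more tersely by directly presenting $\rho_S$ as $U_S$ conjugating a tensor of maximally entangled states (for edges inside $S$) and maximally mixed states (for boundary edges), whereas you first explain why $U_T$ drops out of the partial trace and then do the edge-by-edge case analysis---but the logic is identical.
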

\begin{proof}
At fixed $N$, the random density matrix $\rho_S$ has the following simple expression 
\begin{equation}
	\rho_S = \left[\bigotimes_{C \in \Pi_\text{vertex}^\text{S}} U_C\right]\left(\left(\bigotimes_{\{i,j\} \in \Pi_\text{edge}^\text{in}}
 \ketbra{\Phi^+_{i,j}}{\Phi^+_{i,j}} \right) \otimes \left(\bigotimes_{\{i,j\} \in \Pi_\text{edge}^\text{out}} \frac{\I_{d_i N}}{d_i N} \right)\right)
\left[\bigotimes_{C \in \Pi_\text{vertex}^\text{S}} U_C\right]^*,
\end{equation}
where $\Pi_\text{vertex}^\text{S}$ is the set of surviving vertices, $\Pi_\text{edge}^\text{in}$ is the set of edges connecting surviving vertices and
 $\Pi_\text{edge}^\text{out} = \partial S$ is the set of edges connecting surviving with traced vertices, i.e. crossed edges.
 Since the entropy is unitarily invariant and additive with respect to tensor products, it is immediate that
\begin{equation}
	H(\rho_S) = \sum_{\{i,j\} \in \partial S} \log(d_i N) = |\partial S| \log N + \log\left(\prod_{\{i,j\} \in \partial S} d_i\right).
\end{equation}
\end{proof}

\begin{figure}[htbp]
\centering
\includegraphics{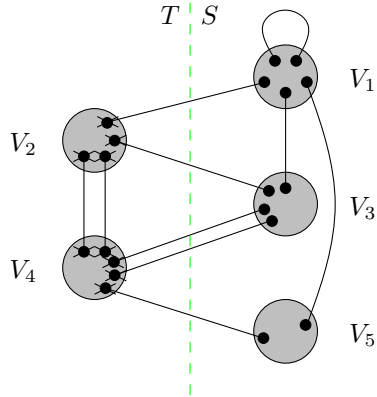}
\caption{An adapted marginal for the graph state in Figure \ref{fig:compatible_ex_simple}. The dashed (green) line represents the \emph{boundary} between the traced--out subsystems $T$ and the surviving subsystems $S$.}
\label{fig:compatible_ex}
\end{figure}
For the system corresponding to the graph shown in Figure \ref{fig:compatible_ex} 
with all subsystems of size $N$  the von Neumann entropy reads
\begin{equation}
H(\rho_S) = 5 \log N .
\end{equation}
This follows from the fact that $\rho_S$ is in this case a unitary conjugation of a maximally mixed state of size $N^5$ with an arbitrary pure state of size $N^6$.

\section{One-vertex marginals}\label{sec:one-vertex}

We shall look now at the simplest situation for which an approximate area law holds.
 We are considering marginals with a unique surviving vertex which may contain traced 
out subsystems. We refer to the results in \cite{CNZ10}, Section 6.3, where the exact same
 situation was studied. In this particular setting, let us introduce some appropriate
 notation for the subsystems of the surviving vertex $V$. The subsystems of $V$ are
 partitioned, on one hand, into surviving $S$ subsystems and traced-out subsystems $T'$
 (note that $T'$ is a subset of the set of all traced systems in the original graph).
 Moreover, the edges of the graph introduce a different partition of $V$, into subsystems
 attached by loops of $V$ (denoted by $F$) and subsystems connected to other vertices, 
which form a set $G$. Hence, three important parameters with respect to the only surviving
 vertex: $|G|$, the number of edges connecting this vertex to other traced out vertices,
 $|T'|$ the number of traced out Hilbert spaces in this vertex and $|S|$, the number of
 surviving Hilbert spaces. For the example presented in Figure
 \ref{fig:1-unitary-marginal-fig}, one has $G=\{1, 6, 7, 8\}$, $T'=\{1, 2, 7, 8\}$ and
 $S=\{3, 4, 5, 6\}$.

\begin{figure}[htbp]
\centering
\includegraphics{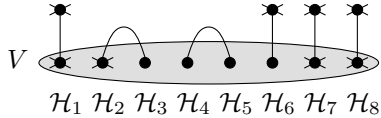}
\caption{Example of a graph state marginal with only one surviving vertex.}
\label{fig:1-unitary-marginal-fig}
\end{figure}

Except for some trivial, degenerate situations ($S=V$, $T=V$ or $G=V$), the entropy of the reduced density matrix is genuinely random, and its asymptotic behavior can be inferred from Theorem 6.4 of \cite{CNZ10}.

\begin{theorem}\label{thm:1-unitary-marginal}
In the limit $N \to \iy$, the average von Neumann entropy of the reduced density matrix $\rho_S$ has the following behavior:
\begin{equation}
	\E H(\rho_S) = o(1) + \begin{cases}
	               	\log(d_S N^{|S|}) & \text{ if } |S| < |T'|+|G|;\\
	\log(d_{T'}d_G N^{|T'|+|G|}) - h_{d_{T'}d_G/d_S} & \text{ if } |S| = |T'|+|G|;\\
	\log(d_{T'}d_G N^{|T'|+|G|})  & \text{ if } |S| > |T'|+|G|.
	               \end{cases}
\end{equation}
where $h_c$ is the entropic correction for a Marchenko-Pastur distribution, defined in equation \eqref{eq:MP-entropy}. In the particular case where $d_i=1$ for all $i$, one has
\begin{equation}\label{eq:entropy-adapted-marginal}
	\E H(\rho_S) = \min\{|S|, |T'| + |G|\}\log N - \frac{1}{2} \delta_{|S|, |T'|+|G|} + o(1).
\end{equation}
\end{theorem}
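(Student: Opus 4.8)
The plan is to reduce the computation of $\E H(\rho_S)$ to the single-edge (Marchenko--Pastur) picture described in Section~\ref{sec:random-states}, by showing that the reduced density matrix $\rho_S$ is, up to a unitary conjugation by the single Haar unitary $U_V$, a tensor product of a maximally mixed state coming from the boundary edges of the surviving vertex and a Wishart-type block coming from the overlap between the $S$-part and the $T'$-part of $U_V$. First I would open up the structure of $\ket\Psi$: each loop at $V$ (the subsystems in $F$) contributes a maximally entangled pair \emph{both of whose legs are at $V$}, while each non-loop edge (subsystems in $G$) contributes one leg at $V$ and one leg at a traced-out neighbour. Tracing over $T$ eliminates all the external neighbours, turning each $G$-leg into a factor $\I_{d_iN}/(d_iN)$ fed into $U_V$; the $F$-legs remain as maximally entangled pairs fed into $U_V$. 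So after tracing out everything except $S$, one gets
\begin{equation}
\rho_S = \trace_{T'}\Big[ U_V\Big(\big(\bigotimes_{\text{loops}}\ketbra{\Phi^+}{\Phi^+}\big)\otimes\big(\bigotimes_{i\in G}\tfrac{\I_{d_iN}}{d_iN}\big)\Big)U_V^*\Big],
\end{equation}
where the outer partial trace is over the subsystems $T'\subset V$. This is exactly the setup of \cite[Section~6.3]{CNZ10}, so Theorem~6.4 there applies.

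Next I would identify the relevant dimensions. The input state to $U_V$ has, on the ``$S$ side'' of the surviving legs, total dimension $d_SN^{|S|}$, and on the complementary ``$T'$ side'' total dimension $d_{T'}d_GN^{|T'|+|G|}$ (the loops being maximally entangled contribute symmetrically, and the $G$-legs contribute maximally mixed factors, so effectively $\rho_S$ is the $S$-marginal of $U_V$ applied to a state of Schmidt-type rank governed by these two dimensions across the $S$ versus $T'$ cut). The asymptotic ratio of these two dimensions is $c=d_{T'}d_G/d_S$ (after the $N^{|S|}$ and $N^{|T'|+|G|}$ factors cancel against the comparison $|S|$ vs $|T'|+|G|$). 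Invoking Theorem~6.4 of \cite{CNZ10}: when $|S|<|T'|+|G|$ the $S$-marginal is asymptotically maximally mixed on its full space, giving entropy $\log(d_SN^{|S|})+o(1)$; when $|S|>|T'|+|G|$ the roles reverse and $\rho_S$ has rank $\sim d_{T'}d_GN^{|T'|+|G|}$ and is asymptotically flat on that smaller support, giving $\log(d_{T'}d_GN^{|T'|+|G|})+o(1)$; and in the balanced case $|S|=|T'|+|G|$ the rescaled spectrum of $\rho_S$ converges to the Marchenko--Pastur law $\pi_c$ with $c=d_{T'}d_G/d_S$, so by the entropy integral \eqref{eq:MP-entropy} one gets $\log(d_{T'}d_GN^{|T'|+|G|})-h_{d_{T'}d_G/d_S}+o(1)$. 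Setting all $d_i=1$ collapses $c$ to $1$, $h_1=1/2$, and the three cases merge into \eqref{eq:entropy-adapted-marginal} with the Kronecker-delta correction.

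The one genuinely non-trivial point — and the step I expect to be the main obstacle — is the passage from \emph{weak convergence} of the empirical eigenvalue distribution $\mu_N\to\pi_c$ (which Theorem~6.4 of \cite{CNZ10} delivers, together with almost-sure convergence of extreme eigenvalues) to \emph{convergence of the von Neumann entropy} $\E H(\rho_S)=\ln N-\int x\log x\,d\mu_N+o(1)$. Since $x\mapsto x\log x$ is continuous and bounded on the relevant compact support in the balanced case, and since the largest eigenvalue of $cN\rho_S$ is almost surely bounded (and uniformly integrable, which one gets from moment bounds on Wishart-type matrices as in \cite{nechita}), the integral converges; the $o(1)$ error requires controlling the contribution of the smallest eigenvalues (where $x\log x$ is small but $\mu_N$ may have mass near $0$ when $c\le 1$), which is handled exactly as in the Marchenko--Pastur discussion recalled in Section~\ref{sec:random-states}. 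In the unbalanced cases the argument is easier: one only needs that $\rho_S$ is asymptotically close in trace norm to the normalized projection onto its typical support, which again follows from the spectral convergence in \cite{CNZ10}. Assembling these pieces yields the claimed formula.
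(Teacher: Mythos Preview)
Your proposal is correct and follows essentially the same route as the paper: both identify the one-vertex marginal as precisely the setting of \cite[Section~6.3]{CNZ10} and then invoke Theorem~6.4 there, with the effective environment dimension $d_{T'}d_G N^{|T'|+|G|}$ arising from the $T'$-trace together with the purifying ancilla of the maximally mixed $G$-legs. The paper in fact gives no further argument beyond this citation, so your additional discussion of the structural formula for $\rho_S$ and of the passage from weak spectral convergence to convergence of the von Neumann entropy is more detailed than what appears here.
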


At this point, it is not obvious how to give an interpretation of Theorem \ref{thm:1-unitary-marginal} and of equation \eqref{eq:entropy-adapted-marginal} in particular, as an area law. Recall that because of the random unitary matrix acting on the vertex $V$, the exact indices of the traced subsystems inside the vertex $V$ are irrelevant (at least in the case of trivial 
relative dimensions). Hence, the topological notion of boundary separating the sets $S$ and $T$ becomes ambiguous. 
In the following section, a more general definition of the boundary of a partition will be given via a combinatorial 
optimization problem.

\section{Defining the boundary surface for a general partition}
\label{sec:area-general}

Before stating and proving the area law in the most general setting, we have to introduce a well-defined 
notion of boundary surface for a general, possibly non-adapted, partition $\{S, T\}$. In this section, we shall restrict our attention to the case where all the subsystems have the same dimension $N$. We shall come back to the general case of arbitrary ratios in Section \ref{sec:different-dimensions}. 

A naive attempt at a definition is to consider the topological notion of boundary that was used in Section \ref{sec:adapted}. In Figure \ref{fig:boundary-ill-defined}, two \emph{a priori} different marginals of the same graph state are represented. Given the fact that independent random unitary matrices act on vertices $V_1$ and $V_2$, one can swap the two Hilbert spaces (say of $V_1$) and leave the distribution of the reduced state invariant. Hence, the two states should have the same statistical properties, although they have different boundary areas: note that the boundary line does not cross any edge on the left hand side picture (null boundary), whereas the right hand side picture has a boundary area of 2 (two intersections). Hence, it is obvious that such a definition of the boundary area is not suitable. The rest of this section is devoted to introducing a new definition for the boundary area of a partition motivated by a strong connection to the maximal flow problem. Note that we restrict ourselves to the case of identical systems,
$d_i=1$, so that $\dim \H_i = N$.

\begin{figure}[htbp]
\centering
\subfigure[]{\label{fig:boundary_1}\includegraphics{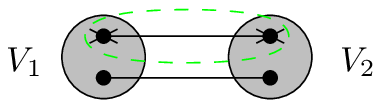}}\qquad\qquad
\subfigure[]{\label{fig:boundary_2}\includegraphics{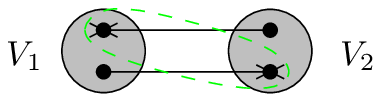}}
\caption{Different assignment of ``crosses'' yield different boundaries. 
The partition on the left has zero boundary (no cuts) and the partition on the right has
 a boundary of $2$: two edges of the graph are cut by the dashed (green) lines.}
\label{fig:boundary-ill-defined}
\end{figure}

Consider, as before, an unoriented graph $\Gamma=(V, E)$ having possibly multiple edges and loops. Consider also a \emph{counting function} $s:V \to \N$,
 such that, for all vertex $v \in V$, $0 \leq s(v) \leq \deg(v)$ and put $t(v) = \deg(v) - s(v)$. These functions will represent the number of surviving (resp. traced) subsystems inside a given vertex. To the pair $(\Gamma, s)$ we shall associate two combinatorial structures: a \emph{network} $\mathcal N_{\Gamma, s}$ and a set of \emph{marked fattened graphs} $\mathcal F_{\Gamma, s}$. In Theorem \ref{thm:graph-theory} we show that the maximal flow in the network $\mathcal N_{\Gamma, s}$ is equal to maximal number of crossings in $\mathcal F_{\Gamma, s}$, providing the connection between statistical properties of the reduced state $\rho_S$ and a combinatorial object related to the partition $\{S, T\}$. This will be the main ingredient in the proof of the general formulation of the area law, Theorem \ref{thm:main}.

Let us first describe the network $\mathcal N_{\Gamma, s}$. Start with the natural network associated
 to $\Gamma$, with the same vertex set as $\Gamma$ and with capacities  $C$ given by the formula
\begin{equation}
C(v ,w)=\begin{cases}
        	\text{number of edges between $v$ and $w$ in $\Gamma$} & \text{ if } v \neq w;\\
		0 & \text{ if } v = w.
        \end{cases}
\end{equation}
To this network, add two distinguished vertices, which were called in \cite{CNZ10} $\id$ and $\gamma$. 
The remaining capacities are defined by $C(\id, v) = t(v) = \deg(v) - s(v)$ and $C(v, \gamma) = s(v)$. 
The network $\mathcal N_{\Gamma, s}$ defined in this way has been shown in \cite[Section 5.2]{CNZ10} to be intimately connected to the statistical properties of random graph states.

We move now to the second combinatorial object associated to the pair $(\Gamma, s)$. First, starting from $\Gamma$, construct the \emph{fattened} graph $\Gamma_\text{fat}$ with vertex set 
\begin{equation}
V_\text{fat} = \bigsqcup_{v \in V} \{v_1, v_2, \ldots, v_{\deg(v)}\}
\end{equation}
and with edge set $E_\text{fat}$ corresponding to the edges of $\Gamma$ in such a way that every two edges are now disjoint. We keep track of the fattening operation by a projection map $f:V_\text{fat} \to V$ defined by $f(v_i) = v$. The fattening operation is depicted in Figure \ref{fig:fat}. A \emph{marking} of fattened graph $\Gamma_\text{fat}$ compatible with the counting function $s$ is a subset $M \subset V_\text{fat}$ such that, for all $v \in V$,
\begin{equation}
	|M \cap f^{-1}(v)| = s(v).
\end{equation}
We write $M \leadsto s$ to denote the fact that the marking $M$ is compatible with the counting function $s$. Note that there always exist compatible markings and that the marking is unique if and only if $s(v)$ is either 0 or maximal for every vertex $v$. This extremal situation corresponds to adapted marginals, which were studied in Section \ref{sec:adapted}. The number of \emph{crossings} of a marking $M$, denoted $\cross(M)$, is the number of edges in $E_\text{fat}$ which connect a marked vertex with an unmarked one
\begin{equation}
	\cross(\Gamma_\text{fat}, M) = |\{i,j\} \in E_\text{fat} \; | \; (i \in M , j \notin M) \text{ or } (i \notin M , j \in M)\}|.
\end{equation}
For the first marking in Fig.  \ref{fig:fat}b, 
one has $\cross(\Gamma_\text{fat}, M_1) = 6$, while for the one shown in Fig.  \ref{fig:fat}c, the number of crossings reads
 $\cross(\Gamma_\text{fat}, M_2) = 8$.

\begin{figure}[ht]
\centering
\subfigure[]{\includegraphics{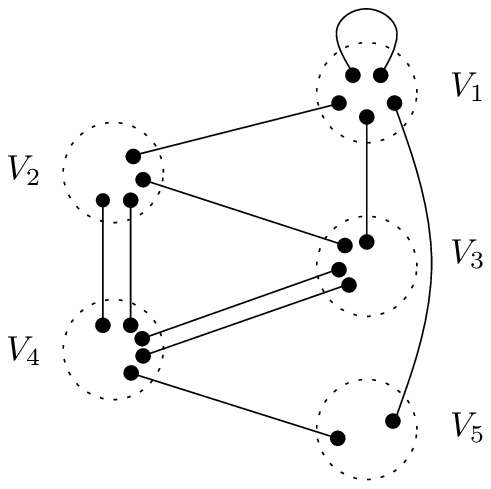}}\\
\subfigure[]{\includegraphics{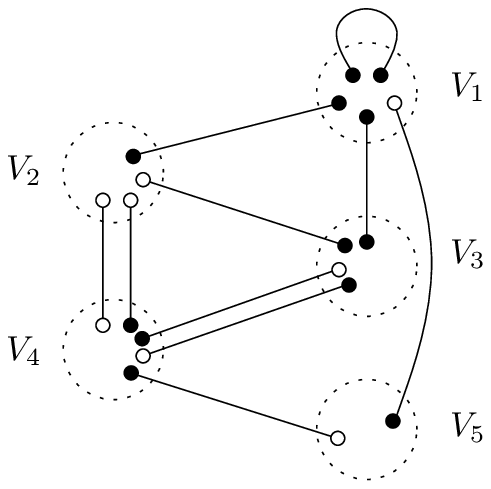}} \qquad\qquad
\subfigure[]{\includegraphics{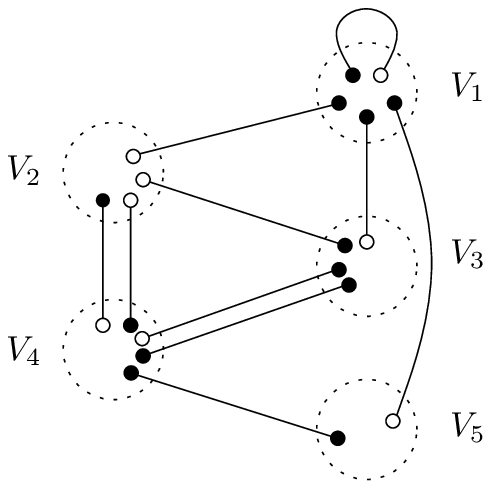}}
\caption{The fattening of the graph in Figure \ref{fig:compatible_example-graph} (a)
 and two markings,  $M_{1}$ in panel (b) and $M_2$ in (c),
 compatible with the count function (number of marked
subsystems in each vertex),  $s(V_1)=s(V_3)=s(V_5)=1$, $s(V_2)=3$, and
 $s(V_4)=2$. Marked subsystems are represented by empty dots,
  and the number of crossings  $\cross$ is equal to the number of edges 
 with one vertex filled and the other one empty.}
\label{fig:fat}
\end{figure}

The main result of this section is the graph-theoretical Theorem \ref{thm:graph-theory}, which makes use of the following important definition.

\begin{definition}\label{def:area}
For a graph $\Gamma$ and a partition $\{S,T\}$ of its associated nodes, define the \emph{area of the boundary} (or, simply, \emph{area}) of the partition $\{S,T\}$ as
\begin{equation}\label{eq:area}
|\partial S| = \max_{M \leadsto s} \cross(\Gamma_\text{fat}, M),
\end{equation}
where the maximum is taken over all the markings $M$ of the set of vertices of the fattened graph $\Gamma_\text{fat}$ compatible with the counting function $s$.
\end{definition}

\begin{theorem}\label{thm:graph-theory}
For any graph $\Gamma$, the maximal flow $X_{\Gamma, S}$ in the network $\mathcal N_{\Gamma,S}$ is equal to the area of the boundary of the partition $\{S,T\}$
\begin{equation}
X_{\Gamma, S} = |\partial S|.
\end{equation}
\end{theorem}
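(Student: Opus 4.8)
The plan is to prove the equality $X_{\Gamma,S} = |\partial S|$ by establishing the two inequalities separately, using the max-flow min-cut theorem on one side and an explicit flow/marking correspondence on the other. First I would recall the structure of the network $\mathcal N_{\Gamma,S}$: it has source $\id$, sink $\gamma$, and for each vertex $v$ of $\Gamma$ an edge $\id \to v$ of capacity $t(v)$, an edge $v \to \gamma$ of capacity $s(v)$, and internal edges $v \leftrightarrow w$ of capacity equal to the edge-multiplicity in $\Gamma$. By max-flow min-cut, $X_{\Gamma,S}$ equals the minimum capacity over all cuts $(A,B)$ with $\id \in A$, $\gamma \in B$; such a cut is determined by the set $A \cap V$, and its capacity is $\sum_{v \in B\cap V} t(v) + \sum_{v \in A \cap V} s(v) + (\text{number of edges of }\Gamma\text{ between }A\cap V\text{ and }B\cap V)$.

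Next I would set up the dictionary between cuts of $\mathcal N_{\Gamma,S}$ and markings of $\Gamma_\text{fat}$. Given a minimal cut with vertex side $A \cap V = W$, I construct a compatible marking $M$ of $\Gamma_\text{fat}$ as follows: for a vertex $v \in W$ (source side), the edge $v \to \gamma$ is cut, contributing $s(v)$; I would mark, among the $\deg(v)$ fattened copies $v_1,\dots,v_{\deg v}$, exactly the $s(v)$ copies that sit on edges of $\Gamma$ going to the \emph{other} side $B\cap V$, filling up arbitrarily if $s(v)$ exceeds the number of such edges. For $v \notin W$ (sink side), the edge $\id \to v$ is cut, contributing $t(v)$; here I would instead leave unmarked the copies on edges crossing to $W$ and mark the remaining $s(v)$ copies. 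One then checks that $\cross(\Gamma_\text{fat}, M)$ counts exactly: (i) each $\Gamma$-edge between $W$ and $B\cap V$, once, as a genuine marked/unmarked crossing; plus (ii) for each $v \in W$, the $\max(0, s(v) - \deg_{\text{cross}}(v))$ extra forced markings create crossings along edges internal to the $W$-side, but — and this is the delicate bookkeeping — at a \emph{minimal} cut these forced crossings are exactly absorbed so that the total matches the cut capacity. This shows $|\partial S| \geq X_{\Gamma,S}$, since the maximum over markings dominates $\cross(\Gamma_\text{fat},M)$ which I will have arranged to equal the min-cut value.

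For the reverse inequality $X_{\Gamma,S} \geq |\partial S|$, I would take an arbitrary marking $M \leadsto s$ and produce a cut of $\mathcal N_{\Gamma,S}$ whose capacity is at least $\cross(\Gamma_\text{fat}, M)$; then min-cut $\le$ that capacity gives $X_{\Gamma,S} \ge$ nothing directly, so instead I think it is cleaner to go via \emph{flows}: from a marking $M$ construct an integral flow in $\mathcal N_{\Gamma,S}$ of value $\cross(\Gamma_\text{fat},M)$. For each crossing edge $\{i,j\} \in E_\text{fat}$ with $i \in M$, $j \notin M$, say $f(i)=v$, $f(j)=w$, route one unit of flow $\id \to w \to v \to \gamma$ when $v \neq w$ (the internal capacity $C(w,v)$ accommodates it since distinct fattened crossings on the same $\Gamma$-edge cannot both be marked/unmarked in the same orientation... actually a $\Gamma$-edge gives just one $E_\text{fat}$-edge, so at most one unit per internal arc — good), and when $v = w$ (a loop) route $\id \to v$ and $v \to \gamma$ directly. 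Checking capacity constraints at $\id \to v$ and $v \to \gamma$ amounts to verifying that the number of marked-endpoint incidences at $v$ is at most $s(v)$ and the unmarked ones at most $t(v)$, which holds by $|M \cap f^{-1}(v)| = s(v)$. Conservation at each $v$ is then automatic. This yields a feasible flow of value $\cross(\Gamma_\text{fat},M)$, hence $X_{\Gamma,S} \geq \cross(\Gamma_\text{fat},M)$ for every $M$, so $X_{\Gamma,S} \geq |\partial S|$.

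The main obstacle I anticipate is the loop/multi-edge and the ``forced extra crossings'' accounting in the first inequality: a vertex $v$ may be forced to have marked copies on edges not crossing the cut (because $s(v)$ is large), and one must argue that at a \emph{minimum} cut this never costs more than the cut capacity — equivalently, that the greedy marking achieving the min-cut value is actually realizable and optimal. Handling loops carefully (a loop at $v$ contributes $C(v,v)=0$ to the network but two fattened half-edges to $\Gamma_\text{fat}$, which can only cross if one copy is marked and the other is not) is the subtle point that makes the two combinatorial structures genuinely match rather than merely being related by an inequality. I would treat loops as a separate case throughout, and for the multi-edge case observe that $k$ parallel $\Gamma$-edges between $v$ and $w$ give $k$ independent $E_\text{fat}$-edges and internal capacity exactly $k$, so the flow construction is tight.
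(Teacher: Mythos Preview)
Your ``marking $\Rightarrow$ flow'' direction (what you call the reverse inequality $X_{\Gamma,S} \geq |\partial S|$) is correct and is exactly what the paper does: each crossing fattened edge $\{i,j\}$ with $i\in M$, $j\notin M$ is sent to the unit path $\id \to f(j) \to f(i) \to \gamma$ (or $\id \to v \to \gamma$ for a loop), and the capacity check is precisely $|M\cap f^{-1}(v)|=s(v)$.

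The problem is your other direction. You attempt to build a marking from a \emph{minimum cut}, but a cut only records which side of the partition each vertex $v\in V$ lies on; it carries no edge-level information, and that is exactly what a marking needs. Your own recipe already fails on the simplest examples. Take a single vertex $v$ with two loops and $s(v)=t(v)=2$: every cut has capacity $2$, but your rule ``mark the $s(v)$ copies sitting on edges going to the other side, filling up arbitrarily otherwise'' sees no crossing edges at all (there is no other side), and the arbitrary completion can put both marks on the same loop, giving $\cross=0$. Similarly, on the path $u-v-w$ with $s(u)=1,\;s(v)=0,\;s(w)=1$, the min-cut $W=\{v\}$ forces your rule at $u\notin W$ to leave the unique copy at $u$ unmarked, contradicting $s(u)=1$. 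The phrase ``at a minimal cut these forced crossings are exactly absorbed'' is not a detail to be tidied up; it is the entire content of the inequality, and it is false for the recipe you wrote down.

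The paper avoids this by working not with a cut but with a \emph{flow decomposition into unit augmenting paths}. Each path $\id \to v_1 \to \cdots \to v_k \to \gamma$ selects a specific first $\Gamma$-edge $(v_1,v_2)$ (or a loop at $v_1$ when $k=1$) on which to place one filled/empty crossing, and the capacity constraints $a(v)\leq t(v)$, $b(v)\leq s(v)$ on the $\id\to v$ and $v\to\gamma$ arcs are exactly what one needs to complete this to a compatible marking. The path decomposition supplies the edge-level choices that the bare cut does not. If you want to repair your argument, replace the min-cut step by a max-flow decomposition and build the marking path by path; the loop case is then handled automatically (a $k=1$ path at $v$ uses one unit of $t(v)$ and one of $s(v)$, matching one filled and one empty half of a loop), and the multi-edge case causes no trouble since each unit path consumes one parallel edge.
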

\begin{proof}
We shall prove inequalities in both directions. First, consider a compatible marking $M \leadsto s$. We shall construct a set of augmenting paths in the network having a total flow of  $\cross(\Gamma_\text{fat}, M)$. For every crossing edge attached to a single vertex $v$ of $\Gamma$, consider the augmenting path $\id \to v \to \gamma$. For a crossing edge $e = (v, w)$, where the black dot is in vertex $v$ and the empty dot is in vertex $w$, consider the augmenting path $\id \to v \to w \to \gamma$. In this way, to each crossing edge, we associate a unit of flow from $\id$ to $\gamma$, proving thus $\cross(\Gamma_\text{fat}, M) \leq X_{\Gamma,s}$. Maximizing over all compatible markings $M$ proves the first inequality. 

Let us move now to proving the other direction. To this end, consider a set of augmenting paths in the network $\mathcal N_{\Gamma, s}$, achieving the maximal flow $X_{\Gamma, s}$. Let 
$$ \id \to v_1 \to v_2 \to \cdots \to v_k \to \gamma$$
be such an augmenting path of length $k \geq 1$.  If $k=1$, choose an edge $(v_1,v_1)$ in $\Gamma_\text{fat}$ and mark one of its vertices as filled and the other one as empty. Otherwise, one can find the edges $(v_1,v_2)$, $(v_2,v_3)$, $\ldots$, $(v_{k-1},v_k)$ in the fattened graph $\Gamma_\text{fat}$. Color these edges in the following way:
\begin{itemize}
\item $(v_1,v_2)$ : $v_1$ filled, $v_2$ empty; 
\item $(v_2,v_3)$ : $v_2$ empty, $v_3$ empty;  
\item $\cdots$ 
\item $(v_{k-1},v_k)$ : $v_{k-1}$ empty, $v_k$ empty.
\end{itemize}
In this way, for each augmenting path of unit flow (one can always assume this, at the cost of repeating edges), one assigns a unique crossing in the fattened graph. It follows that, for this marking $M$, one has $\cross(\Gamma_\text{fat}, M) = X_{\Gamma,s}$. 
This proves the theorem.
\end{proof}

\section{A general area law for graph states}\label{sec:general}

This section contains the proof of the main result of the paper, Theorem \ref{thm:main}. 

\begin{theorem}[Area law for random graph states]\label{thm:main}
Let $\rho_S$ be the marginal $\{S, T\}$ of a graph state $\Gamma$. Then, as $N \to \iy$, the  \emph{area law} holds, in the following sense
\begin{equation}\label{eq:area-law}
	\E H(\rho_S) = |\partial S| \log N - h_{\Gamma, S} + o(1),
\end{equation}
where $|\partial S|$ is the area of the boundary of the partition $\{S,T\}$ defined in \ref{def:area} and $h_{\Gamma, S}$ is a positive constant, depending on the topology of the network $\mathcal N_{\Gamma,S}$ (and independent of $N$).
\end{theorem}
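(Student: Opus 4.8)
The plan is to determine the large-$N$ behaviour of the spectrum of the rescaled operator $N^{|\partial S|}\rho_S$ and to read off the entropy from it. Write $\lambda_1(\rho_S)\geq\lambda_2(\rho_S)\geq\cdots$ for the eigenvalues of $\rho_S$, set $x_i=N^{|\partial S|}\lambda_i(\rho_S)$, and introduce the rescaled empirical measure $\mu_N=N^{-|\partial S|}\sum_{i:\lambda_i>0}\delta_{x_i}$. Since $\sum_i\lambda_i(\rho_S)=\trace\rho_S=1$, we have $\int x\,d\mu_N=1$ identically, and substituting $\lambda_i=x_iN^{-|\partial S|}$ into $-\sum_i\lambda_i\log\lambda_i$ yields the exact identity
\begin{equation}
H(\rho_S)=|\partial S|\log N-\int_0^\infty x\log x\,d\mu_N(x).
\end{equation}
Thus Theorem \ref{thm:main} reduces to showing that $\E\int x\log x\,d\mu_N\to h_{\Gamma,S}$ for a constant $h_{\Gamma,S}\geq 0$ independent of $N$; the whole matter is the identification of the limiting rescaled spectral distribution of $\rho_S$.

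First I would compute the moments $\E\trace(\rho_S^p)$ by the graphical Weingarten calculus of \cite{CNZ10}. As recalled there, $\E\trace(\rho_S^p)$ is a finite sum indexed by families of permutations attached to the surviving vertices, each term carrying an explicit power of $N$; for each $p$ the dominant exponent is $-(p-1)$ times the value of a minimum cut in the network $\mathcal N_{\Gamma,S}$. By max-flow--min-cut this minimum cut equals the maximal flow $X_{\Gamma,S}$, and by Theorem \ref{thm:graph-theory} we have $X_{\Gamma,S}=|\partial S|$. This gives $\E\trace(\rho_S^p)=m_p\,N^{-(p-1)|\partial S|}\bigl(1+o(1)\bigr)$, equivalently $\E\int x^p\,d\mu_N\to m_p$ for every $p\geq 1$ with $m_1=1$, where $(m_p)_{p\geq 1}$ is, by the analysis of \cite{CNZ10}, the moment sequence of a compactly supported probability measure $\mu_{\Gamma,S}$ on $[0,\infty)$ determined by its moments, whose combinatorics is governed by $\mathcal N_{\Gamma,S}$; in the one-vertex case it is a Marchenko--Pastur law, which recovers Theorem \ref{thm:1-unitary-marginal} and, for a single loop-vertex, the computation of Section \ref{sec:random-states}.

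It then remains to upgrade convergence of moments to convergence of the non-polynomial functional $\mu\mapsto\int x\log x\,d\mu$. The key estimate is a tail bound on the largest eigenvalue: from $\|N^{|\partial S|}\rho_S\|_\infty^p\leq\trace\bigl((N^{|\partial S|}\rho_S)^p\bigr)$ and the moment asymptotics one gets $\E\|N^{|\partial S|}\rho_S\|_\infty^p\leq m_pN^{|\partial S|}\bigl(1+o(1)\bigr)$, so, since $m_p^{1/p}$ tends to the edge $R_{\Gamma,S}$ of the support of $\mu_{\Gamma,S}$, letting $p=p(N)\to\infty$ logarithmically gives $\P\bigl(\|N^{|\partial S|}\rho_S\|_\infty>R\bigr)=O(N^{-K})$ for any fixed $R>R_{\Gamma,S}$ and any fixed $K$. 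On the event $\{\|N^{|\partial S|}\rho_S\|_\infty\leq R\}$ the measure $\mu_N$ is supported in $[0,R]$, where $x\log x$ is continuous, so a Weierstrass polynomial approximation together with the moment convergence shows $\int x\log x\,d\mu_N\to\int x\log x\,d\mu_{\Gamma,S}$; on the complementary event the crude bounds $\bigl|\int x\log x\,d\mu_N\bigr|=\bigl||\partial S|\log N-H(\rho_S)\bigr|\leq 2|S|\log N$ and $\int x^k\,d\mu_N\leq N^{(k-1)|\partial S|+|S|}$ make its contribution negligible because the probability of that event decays faster than any power of $N$. This yields $\E\int x\log x\,d\mu_N\to h_{\Gamma,S}:=\int x\log x\,d\mu_{\Gamma,S}$. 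Finally $h_{\Gamma,S}\geq 0$ since $x\log x$ is convex and $\int x\,d\mu_{\Gamma,S}=1$ (Jensen's inequality), with strict positivity except in the degenerate regimes where $\mu_{\Gamma,S}$ is a point mass (the spectrum of $\rho_S$ being asymptotically flat), as for instance in Theorem \ref{thm:1-unitary-marginal} when $|S|\neq|T'|+|G|$.

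The main obstacle is this last upgrade: the von Neumann entropy is not a moment functional of $\mu_N$, so convergence of moments alone is insufficient and one must control the extreme eigenvalues of $\rho_S$ uniformly in $N$ (equivalently, one needs both compactness of the support of $\mu_{\Gamma,S}$ and a non-asymptotic large-deviation bound for $\|N^{|\partial S|}\rho_S\|_\infty$), which also requires tracking the uniformity of the $o(1)$ in the moment expansion over a growing range of $p$. By contrast, the identification of the leading $N$-exponent with the combinatorial area $|\partial S|$ is essentially bookkeeping once Theorem \ref{thm:graph-theory} and the moment expansion of \cite{CNZ10} are in hand.
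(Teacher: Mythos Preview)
Your approach is essentially the paper's: rescale by $N^{|\partial S|}$, use the moment asymptotics from \cite{CNZ10} together with Theorem~\ref{thm:graph-theory} to identify the limiting empirical spectral distribution $\mu_{\Gamma,S}$, and read off $h_{\Gamma,S}=\int x\log x\,d\mu_{\Gamma,S}$. The paper invokes Carleman's condition and a weak-convergence criterion to pass from moment convergence to convergence of $\mu_N$, then simply writes down the entropy formula; you go further and justify why the non-polynomial functional $\int x\log x\,d\mu_N$ actually converges, via a largest-eigenvalue tail bound and polynomial approximation on a compact interval. That extra step is not in the paper's proof but is exactly what is needed to make the last line rigorous, and your Jensen argument for $h_{\Gamma,S}\geq 0$ is likewise absent from the paper; so your plan follows the same route but is, if anything, more complete on the analytic side.
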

\begin{proof}
The idea is to combine moment computations from \cite{CNZ10} for the random matrix $\rho_S$ with the combinatorial identity proved in Theorem \ref{thm:graph-theory}. Recall the following moment formula from \cite[Theorem 5.5]{CNZ10}:
\begin{equation}
	\forall p \geq 1, \quad \E \trace(\rho_S^p) = N^{-X_{\Gamma, S}(p - 1)}( |B_p| + o(1)),
\end{equation}
where $B_p \subset NC(p)^k$ is the subset of non-crossing partitions (or, equivalently, geodesic permutations) corresponding to the augmenting paths leading to the maximal flow $X_{\Gamma, S}$ in the network $\mathcal N_{\Gamma, S}$ (for details, see \cite{CNZ10}). We can restate the above asymptotic expression as a limit: (we write simply $X = {X_{\Gamma, S}}$)
\begin{equation}
\forall p \geq 1, \quad  \lim_{N \to \infty} \E \frac{1}{N^X}  \trace\left[ ( N^X \rho_S)^p \right] = |B_p|.
\end{equation}
In other words, the measures 
\begin{equation}
\mu_N = \E \frac{1}{N^X}  \sum_{i=1}^{N^X} \delta_{ N^X \lambda_i(\rho_S)}
\end{equation}
have limiting moments given by $ |B_p|$. Note that there is a unique probability measure having moments $|B_p|$, since one has the bound $|B_p| \leq \mathrm{Cat}_p^k \leq (4^k)^p$ which is exponential, and thus, by Carleman's condition, these moments uniquely define the probability measure. Since the limit points of the tight sequence of measures $\mu_N$ are uniquely determined by their moments, it follows that \cite[Theorem C.9]{agz} $\mu_N$ converges weakly towards a measure $\mu$ satisfying
$$\forall p \geq 1, \qquad \int x^p  d\mu(x) = |B_p|.$$
Finally, one has 
\begin{equation}
	\E H(\rho_S) =X \log N - h_{\Gamma, S} + o(1),
\end{equation}
with a correction term equal to the entropy of the
asymptotic measure $\mu$,
\begin{equation}
 h_{\Gamma, S} = \int x \log x d\mu(x).
 \end{equation}
\end{proof}
If the measure $\mu_N$ converges to the Marchenko-Pastur distribution 
with parameter $c$, the above term coincides with the
entropy $h_c$ defined in (\ref{eq:MP-entropy}).

\section{Rank of random graph states and a transport problem}\label{sec:rank-alternative}

We start by looking at a linear algebra problem, the maximum rank of a marginal of a graph state, over the set of unitary operations on vertices. 

\begin{theorem}\label{thm:rank}
The maximum rank of a graph state marginal $\rho_S$ is the area of the boundary of the partition $\{S,T\}$
\begin{equation}\label{eq:rank}
\max_{U_1 , \ldots, U_k} \mathrm{rk} \rho_{S} = |\partial S|.
\end{equation}
Moreover, this maximum can be achieved by choosing the $U_i$ to be permutation matrices and the state $\rho_S^*$ which achieves the maximum can be taken maximally mixed. 
\end{theorem}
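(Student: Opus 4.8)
The plan is to establish the equality $\max_{U_1, \ldots, U_k} \rk \rho_S = |\partial S|$ by proving two inequalities, using Theorem~\ref{thm:graph-theory} to translate between the flow quantity $X_{\Gamma,S}$ and the combinatorial area $|\partial S|$. First I would fix the structure: each $U_i$ acts on the tensor factor $\H_i^{\otimes \deg(V_i)}$, and $\rho_S$ is obtained from the maximally entangled edge states by applying $\bigotimes_i U_i$ and tracing out the $T$-subsystems. The key observation is that $\rho_S$ lives on $\H^{\otimes |S|}$, but its rank is controlled by the ``interface'' between surviving and traced data, so one should expect $\rk \rho_S$ to equal the minimal cut, equivalently the maximal flow, in the network $\mathcal N_{\Gamma,S}$.

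For the upper bound $\rk \rho_S \leq |\partial S|$ for all choices of $U_i$: I would argue via a Schmidt-rank / cut argument on the tensor network. Write $|\Psi\rangle$ as a tensor network with one tensor per vertex (the unitary $U_{V}$, reshaped) and one maximally entangled bond per edge. For any bipartition $\{S,T\}$ of the legs, the Schmidt rank of $|\Psi\rangle$ across that cut — and hence $\rk \rho_S$ — is at most $N$ raised to the minimal number of bonds that must be severed to disconnect the $S$-legs from the $T$-legs, where ``severing a bond'' includes possibly cutting inside a vertex since the unitaries can route information arbitrarily. More precisely, since each $U_i$ is an arbitrary unitary, the worst case is exactly the min-cut in $\mathcal N_{\Gamma,S}$: the $\id$-to-$v$ capacities $t(v)$ and the $v$-to-$\gamma$ capacities $s(v)$ encode how many legs of vertex $v$ are traced versus surviving, and the edge capacities encode the bonds. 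By max-flow–min-cut and Theorem~\ref{thm:graph-theory}, this min-cut equals $X_{\Gamma,S} = |\partial S|$, giving $\rk \rho_S \le N^0 \cdot |\partial S|$... -- here I need to be careful: the rank should be $|\partial S|$ exactly, not $N^{|\partial S|}$. So actually the right normalization is that $\rk \rho_S$ is at most the max number of crossings of a marking, i.e. one works at the level of the combinatorial graph rather than with powers of $N$; I would phrase the upper bound using the min-cut directly counting crossing edges, matching the statement.

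For the lower bound and the ``moreover'' claim: I would explicitly exhibit permutation matrices $U_i$ achieving $\rk \rho_S = |\partial S|$. By Definition~\ref{def:area}, choose a marking $M \leadsto s$ maximizing $\cross(\Gamma_\text{fat}, M)$; this tells us, inside each vertex $V_i$, which $s(V_i)$ of the $\deg(V_i)$ legs should be ``surviving.'' The issue is that the actual partition $\{S,T\}$ may assign the surviving/traced labels to different legs of $V_i$ than the optimal marking does, but since we are free to pre-compose $U_i$ with any permutation of the legs of $V_i$, we can realize the optimal marking by a permutation matrix $U_i$ that shuffles legs appropriately (and does nothing else). With this choice, $\rho_S$ becomes a tensor product over the connected components of the ``marked'' fattened graph, and each crossing edge contributes a factor $\I_N/N$ (a maximally mixed qudit) while non-crossing structure contributes pure states; hence $\rho_S$ is unitarily equivalent to $\I_{N^{|\partial S|}}/N^{|\partial S|}$ tensored with a pure state, so $\rk \rho_S = N^{|\partial S|}$... -- again I must reconcile: the theorem statement says $\max \rk \rho_S = |\partial S|$, so either the intended reading is that ranks are measured in units where a qudit has ``rank'' counted additively (i.e. $\log_N$ of the honest rank), or $|\partial S|$ here already denotes $N^{(\text{number of crossings})}$. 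I would follow whichever convention the paper has fixed (most likely $\rk$ is the honest rank and the statement is $\max \rk \rho_S = N^{|\partial S|}$, or $|\partial S|$ is being conflated with its exponential); in the write-up I will state it consistently with Theorem~\ref{thm:main}, where the leading entropy term is $|\partial S| \log N$, which forces $\max \rk \rho_S = N^{|\partial S|}$, and the maximally mixed $\rho_S^*$ on that support achieves the bound.

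The main obstacle I expect is the upper bound: showing that \emph{no} choice of unitaries can beat the min-cut. This requires a clean argument that the Schmidt rank across the $\{S,T\}$ cut of any tensor network of this form is bounded by $N$ to the power of the min-cut, where the cut is allowed to pass through vertices — essentially a quantum max-flow–min-cut statement for networks of maximally entangled bonds and arbitrary unitary tensors. One rigorous route is: group the surviving legs of each vertex and use that $U_i$ is an isometry from the traced legs' ``input'' to a space of dimension $N^{s(V_i)}$, then iteratively contract/collapse along a minimum cut, bounding the rank by the product of bond dimensions crossing the cut; invoke Theorem~\ref{thm:graph-theory} to identify this product's exponent with $|\partial S|$. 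The lower bound and permutation-matrix construction are comparatively routine once the optimal marking is fixed.
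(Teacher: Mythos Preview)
The paper does not actually supply a proof of this theorem: it is stated, immediately followed by commentary, and then recast as the operational ``EntanglementFactory'' transport problem, with the answer $Y_3 = X_{\Gamma,S}$ asserted rather than derived. So there is no paper proof to compare against line by line; what can be compared is whether your outline matches the implicit reasoning the paper relies on.

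Your lower-bound construction is correct and is exactly what the paper has in mind. Choosing an optimal marking $M \leadsto s$ and realizing it by tensor-factor permutations at each vertex reduces $|\Psi\rangle$ to the bare product of edge Bell pairs with the legs relabelled, so that tracing out $T$ leaves $\bigotimes_{\text{crossing edges}} \I_N/N$ tensored with pure Bell pairs on the non-crossing $S$--$S$ edges. This gives $\rk \rho_S = N^{|\partial S|}$ and a maximally mixed state on its support, establishing both the achievability and the ``moreover'' clause. This is also precisely the mechanism behind the paper's Scenario~3 discussion and the later Theorem about $H_0(\rho)=H_1(\rho)=X_{\Gamma,S}\log N$.

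Your diagnosis of the normalization is also correct: as written, $\max \rk \rho_S = |\partial S|$ is not literally what is meant. Consistency with Theorem~\ref{thm:main} (leading term $|\partial S|\log N$) and with the adapted case in Section~\ref{sec:adapted} forces the intended reading $\max_{U_1,\ldots,U_k}\rk\rho_S = N^{|\partial S|}$, i.e.\ $|\partial S|$ is the exponent.

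The one place where your proposal is genuinely incomplete is the upper bound, and you flag this yourself. The paper does not supply this argument either. Your suggested route---bounding the Schmidt rank of the tensor network across the $\{S,T\}$ cut by $N$ to the capacity of any $\id$--$\gamma$ cut in $\mathcal N_{\Gamma,S}$, then invoking max-flow--min-cut and Theorem~\ref{thm:graph-theory}---is the right idea, but the step ``the cut may pass through vertices'' needs to be made precise. Concretely: for a cut $(A,B)$ of the network with $\id\in A$, $\gamma\in B$, group the legs of each vertex $v\in A$ into its $s(v)$ surviving legs (which cross to $\gamma$) versus the rest, and likewise for $v\in B$; then factor $|\Psi\rangle$ through the tensor product of the bond spaces crossing the cut, whose total dimension is exactly $N$ to the cut capacity. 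This is a standard tensor-network rank bound, but it is the only nontrivial ingredient and should be written out explicitly in a full proof.
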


The above result shows that the \emph{maximum} achievable rank of a graph state is equal, asymptotically, to the rank of a \emph{random} marginal. We shall investigate further this property by presenting an alternative formulation to our problem, bridging together the max-flow, max-crossings and the rank aspects. It also has the advantage of being ``operational''. 
The notation below mirrors the one in the rest of the paper and in \cite{CNZ10}. 
Note that there is no randomness in the problem below, and $N$ can be arbitrary.
We consider below the simplest one--qubit case, $N=2$. 

Consider the following problem (see Figure \ref{fig:V3} for an example). 

\begin{figure}[htbp]
\centering
\subfigure[]{\includegraphics{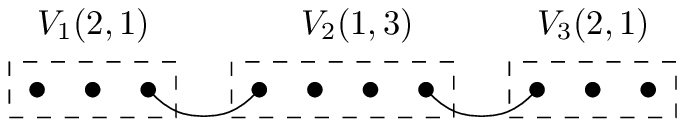}}\\
\subfigure[]{\includegraphics{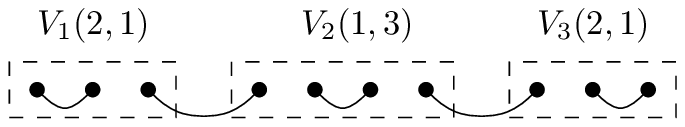}}\\
\subfigure[]{\includegraphics{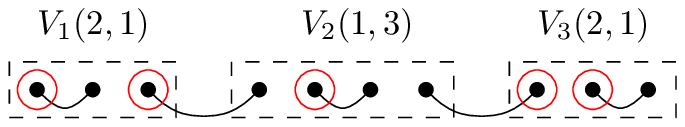}}\\
\caption{An instance of the problem, where $S_i$ and $T_i$ are noted in parenthesis after each $V_i$.
 In (b), we create locally singlet states. The states sent to $A$ are circled in red in (c), the other 
ones being sent to $B$. Note that at each site we consider $S_i+T_i$ particles.}
\label{fig:V3}
\end{figure}

\noindent \textbf{Problem.} The company \emph{EntanglementFactory} has $k$ research facilities around the world, let us call them $V_1, \ldots, V_k$. As a result of past experiments, entangled states are shared between pairs of these laboratories, as follows : labs $V_i$ and $V_j$ share $E_{ij}$ singlet states $| \Phi^+ \rangle \in \mathbb C^N \otimes \mathbb C^N$. Each facility $V_i$ has an unlimited supply of extra $N$-dits which are not entangled with anything else. One day, company \emph{EntanglementFactory} receives an order from the company \emph{WantEntanglement} for a supply of entangled states. Company \emph{WantEntanglement} has two research facilities $A$ and $B$ and offers to ship $N$-dits from each factory $V_i$ to $A$ and/or $B$, as follows: $S_i$ $N$-dits can be shipped from $V_i$ to $A$ and $T_i$ $N$-dits can be shipped from $V_i$ to $B$. Company \emph{WantEntanglement} pays \$10
 for each unit of entanglement it will have between stations $A$ and $B$. What is the maximal profit the company \emph{EntanglementFactory} can make, just by using local unitary matrices at each site $V_i$ ?

We shall answer this question in the following three scenarios, each situation imposing some physical restrictions or liberties on the system.

\noindent \textbf{Scenario 1 : No initial entanglement.} Suppose that all the entangled particles shared between pairs of $V_i$'s are lost. The best we can do is to create locally, at each $V_i$, maximally entangled (singlet) states and to ship one half to $A$ and the other half to $B$. The number of entangled pairs between $A$ and $B$ will then be
$$Y_1 = \sum_i \min(S_i, T_i).$$

\noindent \textbf{Scenario 2 : Global operations are allowed.} Suppose that company \emph{EntanglementFactory} has the ability of performing \emph{global} operations on all of its facilities $V_i$. Then it is easy to produce a maximally entangled state between $A$ and $B$:
$$Y_2 = \min(\sum_i S_i, \sum_i T_i).$$

\noindent \textbf{Scenario 3 : Entanglement with local operations.} Without any assumptions, we show that
$$Y_3 = X_{\Gamma, S}.$$

First, notice that the problem in this case is just as a restatement of the rank theorem discussed earlier,
 before taking the partial trace of the state. The facilities $A$ and $B$ define the partition with respect to which 
the partial trace is considered. The entropy of entanglement of the pure state shared between $A$ and $B$ is just the
 von Neumann entropy of the reduced density matrix $\rho_S$. 
 Note that the following inequality concerning the number of entangled pairs 
in different scenarios holds $ Y_1 \leq Y_3 \leq Y_2.$

One can characterize entanglement by using the generalized R\'enyi entropy $H_q(\rho):= \frac{1}{1-q} \ln {\rm Tr} \rho^q$.
Note that in the limit the R\'enyi parameter $q$ tends to unity
this expression reduces to the von Neumann entropy, $\lim_{q\to 1} H_q(\rho)=H(\rho)$.
Furthermore, the rank $\rk$  of a matrix is given by the generalized
entropy of order zero,  $\log \rk \rho = H_0(\rho)$.

To summarize, the results on the rank and on the R\'enyi entropy $H_q$  
have the following translation 

\begin{theorem}
There exist local unitary operations $U_i$ (which can be taken to be permutation matrices, 
i.e. at each site $V_i$ we just have to say where each particle goes, to $A$ or $B$) such that
$$H_0(\rho) = H_1(\rho) = H_p(\rho) = X_{\Gamma, S} \log N ,$$
for any $q\ge 0$. In other words, $\rho$ is essentially a maximally mixed state. 
Moreover, the permutation matrices involved can be computed efficiently, using a flow algorithm.
\end{theorem}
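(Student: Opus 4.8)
The plan is to read this statement off from Theorem \ref{thm:rank}: once the rank-maximizing local unitaries are known to produce a genuinely maximally mixed marginal, the equality of all the Rényi entropies $H_q$ becomes an elementary computation, and the effectivity claim is just the existence of a polynomial-time max-flow algorithm.

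First I would fix permutation matrices $U_1, \ldots, U_k$ as provided by Theorem \ref{thm:rank}, so that $\rho = \rho_S$ is maximally mixed with support of dimension $N^{X_{\Gamma, S}}$. It is worth recalling why such a choice exists: taking each $U_i$ to be a permutation of the $\deg(V_i)$ tensor legs at the vertex $V_i$ amounts to fixing a marking $M$ of the fattened graph $\Gamma_\text{fat}$ compatible with the counting function $s$, and one selects $M$ to be optimal, i.e.\ with $\cross(\Gamma_\text{fat}, M) = |\partial S| = X_{\Gamma, S}$ (such an $M$ exists by Definition \ref{def:area} and Theorem \ref{thm:graph-theory}). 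With this choice, tracing out $T$ kills every edge of $\Gamma_\text{fat}$ having both legs in $T$, leaves every edge with both legs in $S$ as a pure tensor factor $\ketbra{\Phi^+}{\Phi^+}$, and turns each of the $X_{\Gamma, S}$ crossing edges into a factor $\I_N / N$. Hence $\rho$ is unitarily equivalent to $\left( \I_N / N \right)^{\otimes X_{\Gamma, S}}$ tensored with a pure state.

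Now the Rényi entropies are immediate, because the map $\sigma \mapsto H_q(\sigma) = \frac{1}{1-q}\ln\trace(\sigma^q)$ is additive under tensor products and vanishes on pure states, so $H_q(\rho) = X_{\Gamma, S}\, H_q(\I_N / N)$ for every $q \geq 0$. For a single maximally mixed $N$-dimensional state one has $\trace\big((\I_N / N)^q\big) = N^{1-q}$, hence $H_q(\I_N / N) = \log N$ for all $q \neq 1$, and likewise $H_0(\I_N / N) = \log \rk(\I_N / N) = \log N$ and $H_1(\I_N / N) = H(\I_N / N) = \log N$ in accordance with the limit $q \to 1$. Therefore $H_0(\rho) = H_1(\rho) = H_q(\rho) = X_{\Gamma, S} \log N$ for all $q \geq 0$, which is precisely the statement that $\rho$ is maximally mixed up to a harmless pure tensor factor.

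For effectivity, the capacities of the network $\mathcal N_{\Gamma, S}$ are non-negative integers bounded by $2m$, so $X_{\Gamma, S}$ and an integral maximum flow can be computed in polynomial time by a standard routine such as Edmonds--Karp; decomposing that flow into unit augmenting paths $\id \to v_1 \to \cdots \to v_k \to \gamma$ and applying the constructive coloring rule from the proof of Theorem \ref{thm:graph-theory} outputs an optimal marking $M$, i.e.\ an explicit instruction, at each site $V_i$, of which incident particle goes to $A$ and which goes to $B$ --- equivalently the permutation matrix $U_i$. The subtlest point I would treat with care is the passage from \emph{maximal rank} to \emph{exactly maximally mixed}: since $H_0(\rho) = H_1(\rho)$ already forces the spectrum to be flat, one must check that the construction behind Theorem \ref{thm:rank} realizes the flat spectrum and not merely the correct support dimension. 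This is exactly what the tensor-factor bookkeeping in the second paragraph provides (and what the second assertion of Theorem \ref{thm:rank} records), so it is not a genuine obstacle --- but it should be stated explicitly rather than assumed.
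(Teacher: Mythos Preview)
Your proposal is correct and follows exactly the route the paper intends: the theorem is presented in the paper as a direct ``translation'' of Theorem~\ref{thm:rank} (no separate proof is given there), and you have simply written out what that translation means---the optimal marking of $\Gamma_\text{fat}$ makes $\rho_S$ a tensor product of $X_{\Gamma,S}$ copies of $\I_N/N$ with a pure factor, after which all R\'enyi entropies coincide trivially. Your handling of the effectivity clause via a standard max-flow routine and the constructive coloring of Theorem~\ref{thm:graph-theory}, and your explicit check that the construction yields a flat spectrum (not merely the right rank), are exactly the details the paper leaves to the reader.
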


In the random case (suppose the engineers of company \emph{EntanglementFactory} are on vacation, 
so the staff decides to implement at each site $V_i$ random, independent local unitary transformations), we have the following result, which is a restatement of Theorem \ref{thm:main}.

\begin{theorem}
For independent random Haar unitary matrices $U_i$, we have, almost surely as $N \to \infty$,
$$H_0(\rho) = X_{\Gamma, S} \log N$$
and 
$$\E H_1(\rho) = X_{\Gamma, S} \log N - h_{\Gamma, S} + o(1).$$
\end{theorem}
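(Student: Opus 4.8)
The plan is to deduce this statement by assembling three results already in hand: the area law of Theorem~\ref{thm:main}, the max-flow/max-crossings identity of Theorem~\ref{thm:graph-theory}, and the rank theorem, Theorem~\ref{thm:rank}. The second displayed identity needs no new work. By definition $H_1(\rho)=H(\rho_S)$ is the von Neumann entropy of the marginal; by Theorem~\ref{thm:graph-theory} the area $|\partial S|$ of Definition~\ref{def:area} equals the maximal flow $X_{\Gamma,S}$; hence the expansion $\E H_1(\rho)=X_{\Gamma,S}\log N - h_{\Gamma,S} + o(1)$, with $h_{\Gamma,S}>0$, is exactly Theorem~\ref{thm:main}. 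So the only real content is the first identity, $H_0(\rho)=X_{\Gamma,S}\log N$; since $H_0(\rho)=\log\rk\rho_S$, this is equivalent to the claim that the random marginal $\rho_S$ attains the maximal rank $N^{X_{\Gamma,S}}$ almost surely.

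For that I would run a genericity argument. Recall first, from Theorem~\ref{thm:rank} together with $|\partial S|=X_{\Gamma,S}$, that $\rk\rho_S\le N^{X_{\Gamma,S}}$ for \emph{every} choice of the Haar unitaries $U_V$, and that this bound is attained at some tuple of permutation matrices. It remains to see that equality in fact holds on a set of full Haar measure. Fix $N$. The entries of $\ketbra{\Psi}{\Psi}$, hence of $\rho_S=\trace_T\ketbra{\Psi}{\Psi}$, are polynomials in the entries of the $U_V$ and their complex conjugates; therefore every minor of $\rho_S$ is a real-analytic function on the compact, connected, real-analytic manifold $\mathcal G:=\prod_V U(N^{\deg V})$. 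The event $\{\rk\rho_S<N^{X_{\Gamma,S}}\}$ is the simultaneous vanishing of all minors of size $N^{X_{\Gamma,S}}$, i.e. a real-analytic subset $Z\subset\mathcal G$, and by Theorem~\ref{thm:rank} the permutation-matrix witness lies outside $Z$, so $Z$ is a \emph{proper} subset. A proper real-analytic subset of a connected real-analytic manifold has Lebesgue measure zero, and Haar measure on $\mathcal G$ is absolutely continuous with respect to Lebesgue measure; hence $\rk\rho_S=N^{X_{\Gamma,S}}$, i.e. $H_0(\rho)=X_{\Gamma,S}\log N$, with probability one. This holds for each fixed $N$, and therefore, on the product probability space, it holds simultaneously for all $N$ (a countable intersection of full-measure events), which yields the almost-sure statement as $N\to\infty$.

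I do not foresee a serious obstacle. All the analytic weight---the moment asymptotics from \cite{CNZ10} and the Carleman/uniqueness argument isolating the limiting spectral measure $\mu$ with $h_{\Gamma,S}=\int x\log x\,d\mu(x)$---is already absorbed into Theorem~\ref{thm:main}, and the combinatorics into Theorem~\ref{thm:graph-theory}. The one genuinely new step is the passage from ``generic maximal rank'' to ``almost-sure maximal rank'', and its only delicate point---that the rank-deficiency locus is a \emph{proper} real-analytic subvariety---is supplied directly by the explicit permutation-matrix construction of Theorem~\ref{thm:rank}. (If one wished to sidestep genericity, one could instead read off the almost-sure rank from the large-$p$ behaviour of $\trace(\rho_S^p)$ combined with a Borel--Cantelli estimate, but the analytic-variety route is shorter given what has already been proved.)
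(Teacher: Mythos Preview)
Your proposal is correct and aligned with the paper's treatment: the paper offers no proof at all beyond the remark that the theorem ``is a restatement of Theorem~\ref{thm:main}'', so for the $H_1$ identity you are simply making explicit the identification $|\partial S|=X_{\Gamma,S}$ from Theorem~\ref{thm:graph-theory}, exactly as intended. For the $H_0$ identity you actually supply more than the paper does: the genericity step---passing from the permutation-matrix witness of Theorem~\ref{thm:rank} to the almost-sure maximal rank via the proper real-analytic subvariety argument---is left entirely implicit in the text, and your version of it is correct and the natural way to close the gap.
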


Comparing the two results above, one concludes that the random choice is nearly optimal.

\section{Some results for different subsystem dimensions}
\label{sec:different-dimensions}
 In this final section, we analyze simple graphs in the general setting, where we allow subsystems to have different dimensions. Although we can not state a general area law for marginals of such graph states, we perform direct computations in some simple settings using the full machinery developed in \cite{CNZ10}. Our main tool is the following result, valid for random graph states with subsystems of dimensions $d_iN$.

\begin{theorem}[see {\cite[Theorem 5.4]{CNZ10}}]\label{thm:moments-network}
The asymptotic moments of a graph state marginal $\rho_S$ are given by the formula
\begin{align}
\E \trace(\rho_S^p) &=(1+o(1)) N^{-X(p-1)}\!\!\!\!\!\!\!\! \sum_{(\beta_1, \ldots, \beta_k) \in B} \prod_{i=1}^k \left( d_{S_i} \right)^{\#(\gamma^{-1} \beta_i)} \prod_{i=1}^k \left( d_{T_i} \right)^{\#\beta_i} \\
\notag & \qquad \cdot
 \prod_{1 \leq i < j \leq k} \left( d_{E_{ij}} \right)^{\#(\beta_i^{-1} \beta_j)-p} \prod_{i=1}^k d_{C_i}^{-p},
\end{align}
where $B$ is a set of permutations that can be computed from the network $\mathcal N_{\Gamma, S}$ associated to the marginal and $X$ is the maximal flow $\id \to \gamma$ in the same network. For each block $C_i$ of $\Pi_\text{vertex}$, we write $S_i = S \cap C_i$ and $T_i = T \cap C_i$. Finally, $E_{ij}$ is the set of edges going from vertex $i$ to vertex $j$. 
\end{theorem}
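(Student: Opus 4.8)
The plan is to compute $\E\trace(\rho_S^p)$ by graphical Weingarten calculus, as is done in \cite{CNZ10}. First I would unfold the $p$-th moment into a scalar tensor network: since $\rho_S = \trace_T\ketbra{\Psi}{\Psi}$, the quantity $\trace(\rho_S^p)$ is built from $p$ copies of $\ket\Psi$ and $p$ copies of $\bra\Psi$, with the surviving legs wired along the full $p$-cycle $\gamma=(1\,2\,\cdots\,p)$, the traced legs contracted in place, each vertex $C_i$ of $\Gamma$ carrying $p$ boxes $U_{C_i}$ together with $p$ boxes $\bar U_{C_i}$, and each edge $\{i,j\}$ carrying $2p$ copies of the maximally entangled vector $\ket{\Phi^+}$ of dimension $d_{E_{ij}}N$.

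Next I would integrate out the independent Haar unitaries one vertex at a time. The Weingarten formula replaces the $2p$ boxes at $C_i$ by a double sum over $S_p$ weighted by $\Wg(d_{C_i}N^{\deg(C_i)},\,\cdot\,)$, and the wirings induced by the $\ket{\Phi^+}$'s and by $\gamma$ couple these permutations across vertices. Using the asymptotics $\Wg(d,\sigma) = (1+o(1))\,d^{\#\sigma - 2p}\,\Mob(\sigma)$ with $d = d_{C_i}N^{\deg(C_i)}$, and reorganising the resulting Möbius sums vertex by vertex as in \cite[Section~5]{CNZ10}, the leading-order part collapses onto a combinatorially defined set $B$ of tuples $(\beta_1,\dots,\beta_k)$ of single permutations, one per vertex, determined by the flow structure of $\mathcal N_{\Gamma,S}$. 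Then, closing the remaining loops, I would collect the dimension factors: $(d_{S_i})^{\#(\gamma^{-1}\beta_i)}$ from the loops through the surviving subsystems of $C_i$, $(d_{T_i})^{\#\beta_i}$ from the loops through its traced subsystems, $(d_{E_{ij}}N)^{\#(\beta_i^{-1}\beta_j)}$ from the $\ket{\Phi^+}$ contractions on the edge $\{i,j\}$ against the $(d_{E_{ij}}N)^{-p}$ coming from their normalisations, and the residual $N$- and $d_{C_i}$-powers carried by the Weingarten weights, which after the per-vertex dimension bookkeeping contribute the clean factor $d_{C_i}^{-p}$.

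Finally I would group the total power of $N$ as a function of $(\beta_1,\dots,\beta_k)$ and identify it, via the max-flow/min-cut analysis of \cite{CNZ10}, as $-X(p-1)$ precisely on the tuples corresponding to augmenting-path decompositions realising the maximal flow $X$ in $\mathcal N_{\Gamma,S}$, and strictly smaller on all other tuples; these optimal tuples are exactly the elements of $B$. Keeping only them, together with their leading coefficients --- the displayed products of fixed-ratio powers --- yields the stated formula.

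I expect the main obstacle to be the bookkeeping in the second step: tracking how the $2p$ copies of $\ket{\Phi^+}$ at an edge glue the two Weingarten permutations at its two endpoints, and verifying that these gluings are simultaneously consistent at all edges, so that each vertex ends up carrying a single effective permutation $\beta_i$ at leading order. Equally delicate is the optimisation in the last step, namely that the $N$-exponent is genuinely maximised along flows so that $B$ is the flow-optimal set and no further configuration contributes in the limit. Both points are, however, established in full generality in \cite{CNZ10}, so in the present context one simply invokes that analysis; note in particular that the adapted case of Section~\ref{sec:adapted} is the degenerate instance where $s(v)\in\{0,\deg(v)\}$ for every vertex, the marking is unique, $B$ is a singleton, and the sum collapses to the deterministic value obtained there.
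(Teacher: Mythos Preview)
Your proposal is correct and follows precisely the Weingarten-calculus route of \cite{CNZ10}; note, however, that the present paper does not reprove this statement at all but simply imports it verbatim as \cite[Theorem~5.4]{CNZ10}. Your sketch is therefore a faithful outline of the argument in the cited reference rather than a new contribution, and the two delicate points you flag (the edge-gluing bookkeeping and the max-flow optimisation of the $N$-exponent) are exactly the ones handled in Sections~5.1--5.2 of \cite{CNZ10}, so invoking that analysis, as you do, is the appropriate move here.
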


The above general expression for the average $p$-th moment allows us to analyze ensembles
of quantum states corresponding to the following exemplary graphs, each containing $4$ subsystems, but having different geometry.

\subsection*{Black holes graph}

We are going to discuss here a simple graph described by two edges
joined in one vertex, used to model the trans--horizon entanglement
during the process of evaporation of a black hole \cite{BSZ09,BP11}. 
The entire system is thus composed out of four subsystems,
two of which have the same dimension equal to $d_1N$,
while the other two have the dimensions $d_2N$. The ratios
$d_{1,2}$ are treated as parameters of the model. We shall consider two marginals of 
this graph state, see Figure \ref{fig:black-hole-graph}. In both cases, the network associated to the marginal 
is the same, and has a maximum flow $X=2$. Moreover, the set $B$ of permutations achieving this maximum flow is 
\begin{equation}
	B = \{(\beta_1,\beta_2,\beta_3) \in \S_p^3 \, : \, \id = \beta_1 \leq \beta_2 \leq \beta_3 = \gamma\}.
\end{equation}

\begin{figure}[htbp]
\centering
\subfigure[]{\includegraphics{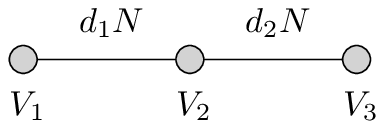}}\qquad
\subfigure[]{\includegraphics{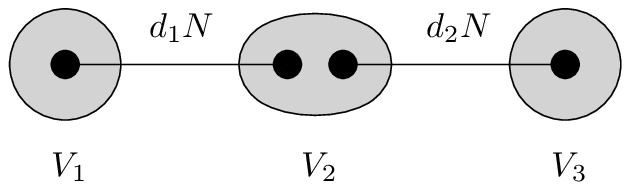}}\\
\subfigure[]{\includegraphics{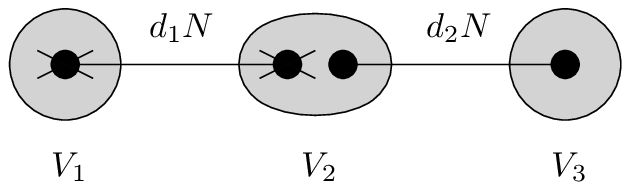}}\qquad
\subfigure[]{\includegraphics{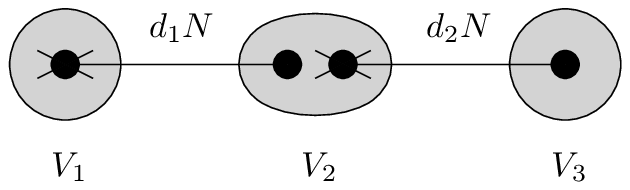}}\\
\subfigure[]{\includegraphics{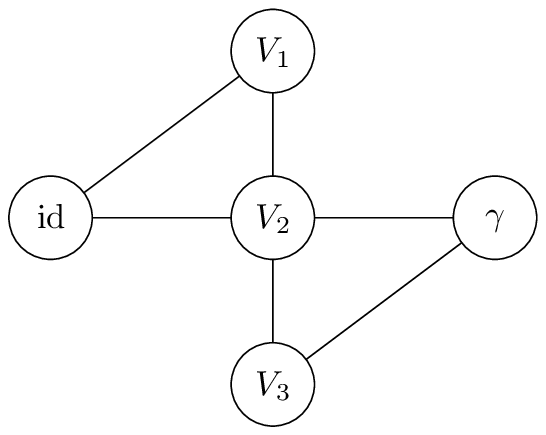}}
\caption{The black hole graph, in the simplified notation (a) and the diagram with 4 subsystems (b). In the middle row, two of its marginals: on the left (c), $d_1N$-sized subsystems are traced out and on the right (d), subsystem of both dimensions are traced out. In the bottom row, the network associated to both marginals (e).}
\label{fig:black-hole-graph}
\end{figure}

\subsubsection*{First case : subsystems of size $d_1N$ are traced}
Applying Theorem \ref{thm:moments-network} to this setting, we obtain 
\begin{align}
	\E \trace\rho_S^p &=(1+o(1)) N^{-2(p-1)} \sum_{\id = \beta_1 \leq \beta_2 \leq \beta_3 = \gamma} d_2^{\#(\gamma^{-1}\beta_2)} d_2^{\#(\gamma^{-1}\beta_3)} d_1^{\#(\beta_1)} d_1^{\#(\beta_2)} \\
\notag	& \qquad \cdot d_1^{\#(\beta_1^{-1}\beta_2) - p} d_2^{\#(\beta_2^{-1}\beta_3) - p} (d_1^2 d_2^2)^{-p} \\
\notag &= (1+o(1))  N^{-2(p-1)} d_1^{-2p} d_2^2 \sum_{\sigma \in NC(p)} \left( \frac{d_1}{d_2}\right)^{2\#\sigma}.
\end{align}
In other words 
\begin{equation}
	\lim_{N \to \infty} \frac{1}{d_2^2N^2}\E \trace(d_1^2N^2\rho_S)^p = \sum_{\sigma \in NC(p)} \left( \frac{d_1}{d_2}\right)^{2\#\sigma}.
\end{equation}
Thus, the rescaled random matrix $d_1^2N^2\rho_S \in \M_{d_2^2N^2}(\C)$ converges in moments to the Marchenko--Pastur distribution
 of parameter $d_1^2 / d_2^2$. Let us now compute the average entropy of this random matrix. Use
\begin{equation}
	\lim_{N \to \infty} \frac{1}{d_2^2N^2} \E H(d_1^2N^2\rho_S) = -h_{d_1^2 / d_2^2}
\end{equation}
to show that
\begin{align}
\label{eq:entropy-black-hole-1}	\E H(\rho_S) &= o(1) + \begin{cases}
	                   	\log(d_2^2N^2) - \frac{d_2^2}{2d_1^2} & \text{ if } d_1 \geq d_2\\
				\log(d_1^2N^2) - \frac{d_1^2}{2d_2^2} & \text{ if } d_1 < d_2,
	                   \end{cases}\\
\notag	                   &=\log(d^2N^2) - \frac{d^2}{2D^2} +o(1),
\end{align}
where $d=\min(d_1, d_2)$ and $D = \max(d_1, d_2)$. Note that the formula above is symmetric in $d_1$ and $d_2$, a consequence of the fact that the non-zero spectra of the two reduced density operators of a pure state are identical.

\subsubsection*{Second case : subsystems of both sizes are traced}
In this case, the moment formula reads
\begin{equation}
	\E \trace\rho_S^p  = (1+o(1)) N^{-2(p-1)} (d_1d_2)^{-p+1} \sum_{\id = \beta_1 \leq \beta_2 \leq \beta_3 = \gamma} 1.
\end{equation}
Thus, the rescaled random matrix $d_1d_2N^2\rho \in \M_{d_1d_2N^2}(\C)$ converges in moments to the Marchenko--Pastur
 distribution of parameter $1$. The entropy computation in this case is easier:
\begin{equation}
	\E H(\rho_S) = 	\log(d_1d_2N^2) - \frac{1}{2} + o(1).
\end{equation}

Note that the two entropy formulas agree in the case $d_1 = d_2$. 

\subsection*{Double line graph - the oxygen molecule $O_2$}

The graph presented in Figure \ref{fig:oxygen-graph}, which can be symbolically represented by  $O=O$, might be
interpreted as the oxygen molecule.
 We will discuss here the general version of the model 
in which there are two pairs of subsystems of size $d_1N$ and $d_2N$ respectively, and look at two different marginals. The two marginals correspond to the same network, which has a maximal flow $X=2$. The set $B$ of permutations achieving this maximum flow is 
\begin{align}
	B &= \{(\beta_1,\beta_2) \in \S_p^2 \, : \, \id \leq \beta_1 \leq \beta_2 \leq  \gamma\ \text{ and } \id \leq \beta_2 \leq \beta_1 \leq  \gamma\}\\
	\notag &=  \{(\beta_1,\beta_2) \in \S_p^2 \, : \, \id \leq \beta_1 = \beta_2 \leq  \gamma\ \},
\end{align}
which is in bijection with the set found for the ``black-hole'' graph.

\begin{figure}[htbp]
\centering
\subfigure[]{\includegraphics{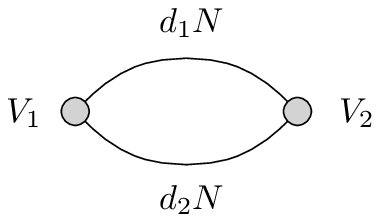}}\qquad
\subfigure[]{\includegraphics{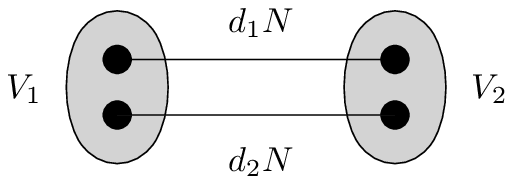}}\\
\subfigure[]{\includegraphics{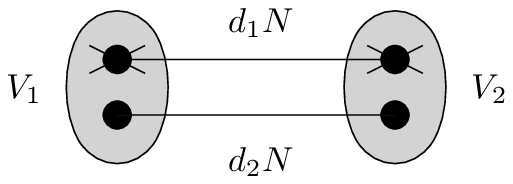}}\qquad
\subfigure[]{\includegraphics{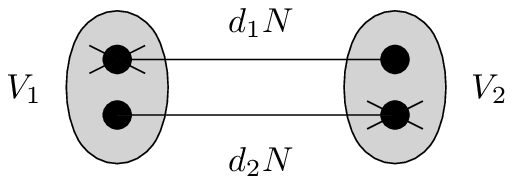}}\\
\subfigure[]{\includegraphics{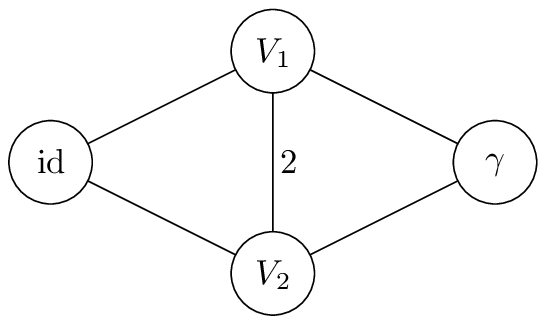}}
\caption{The oxygen graph, in the simplified notation (a) and the diagram with 4 subsystems (b). In the middle row, two of its marginals: on the left (c), $d_1N$-sized subsystems are traced out and on the right (d), subsystem of both dimensions are traced out. In the bottom row, the network associated to both marginals (e).}
\label{fig:oxygen-graph}
\end{figure}

\subsubsection*{First case : subsystems of size $d_1N$ are traced} 
Proceeding as in the case of the ``black-hole'' graph, we obtain the same result as in the first case above. Thus, as before
\begin{equation}
	\E H(\rho_S) = \log(d^2N^2) - \frac{d^2}{2D^2} +o(1),
\end{equation}
where $d=\min(d_1, d_2)$ and $D = \max(d_1, d_2)$.

\subsubsection*{Second case : subsystems of both sizes are traced}
Again, we obtain the same result as in the corresponding case of the ``black-hole'' graph
\begin{equation}
	\E H(\rho_S) = \log(d_1d_2N^2) - \frac{1}{2} +o(1).
\end{equation}

To conclude, note that in general, the relative size of the traced out systems matters (unless $d_1=d_2$). At this point, we have no interpretation for the surprising fact that the two graphs studied here yield the same output entropies.

\section{Perspectives and open questions}

In this work, we study the structured model of random pure quantum states introduced in \cite{CNZ10} from the perspective of area laws. We showed, in the situation where the vertex size is constant, that the entropy of entanglement satisfies, on average, an area law, for a suitable definition of surface area. Indeed, since we are dealing with unitary mixing at each vertex, the usual notion of area does not make sense, so one defines surface area via a combinatorial optimization procedure. In the final section of the paper, we studied some situations where Hilbert space dimension varies, in the case of very simple graphs. Unfortunately, our current methods (area defined via combinatorial optimization) are not adapted anymore, and some further work is necessary to establish an area law in this more general setting. 

A mathematical improvement over the current results would be to obtain estimates for the probabilities of failure of the announced area laws. Indeed, our results focus on average quantities and it would be interesting for to derive \emph{large deviations bounds} for the entropy of entanglement at fixed (but large) Hilbert space dimension $N$. 

Another direction for future work would be to continue the project started in \cite{CNZ10} and to analyze different models of structured entanglement, motivated by solid state physics. Indeed, our starting assumption is that the initial entanglement between vertices is encoded by maximally entangled states. It would be natural to drop this assumption and to work with \emph{generic entanglement}, that could be generated, say, by associating to graph edges an independent set of unitary matrices. This would render the graph state model more symmetric and make it more realistic. 

\bigskip 

\noindent\textbf{Acknowledgments.}
It is a pleasure to thank Pawe{\l} Kondratiuk 
for fruitful discussions on random graph states, and the anonymous referees for several useful remarks and comments.
Financial support by the Polish National Centre of Science
under the grant number DEC-2011/02/A/ST1/00119 
and by the Deutsche Forschungsgemeinschaft under the project
SFB Transregio--12  is  gratefully acknowledged. 
I.~N.~ acknowledges financial support from the ANR project \mbox{OSvsQPI} and the PEPS-ICQ CNRS project \mbox{Cogit}.
The research of B.~C.~ was partly supported by an NSERC discovery grant, an Ontario's ERA and AIMR.

\end{document}